\newcommand*{\thanksNote}{%
  \thanks{Computer Science Department, Carnegie Mellon University.
    Supported by NSF grant FET-1909310. This material is based upon work
    supported by the National Science Foundation under grant numbers
    listed above. Any opinions, findings and conclusions or
    recommendations expressed in this material are those of the author
    and do not necessarily reflect the views of the National Science
    Foundation (NSF). \texttt{\{cbadescu,odonnell\}@cs.cmu.edu}}
}
\title{Lower bounds for testing complete positivity \\ and quantum separability}
\author{Costin B\u{a}descu\thanksNote \and Ryan O'Donnell\footnotemark[1]}
\date{\today}
\renewcommand*{\sD}{\mathsf{D}}
\renewcommand*{\sX}{\mathsf{Z}}
\DeclareMathOperator{\CPD}{CPD}
\DeclareMathOperator{\CP}{CP}
\DeclareMathOperator{\Uniform}{Unif}
\newcommand*{\Unif}{\Uniform_{d^2}}
\DeclarePairedDelimiterX\diverg[2]{(}{)}{#1 \mathrel{}\mathclose{}\delimsize\|\mathopen{}\mathrel{} #2}
\newcommand{\KL}[2]{\mathrm{KL}\diverg{#1}{#2}}
\begin{document}
\maketitle

\begin{abstract}
  In this work we are interested the problem of testing quantum entanglement.
  More specifically, we study the \emph{separability} problem in quantum
  property testing, where one is given $n$ copies of an unknown mixed
  quantum state $\varrho$ on $\CC^d \tensor \CC^d$, and one wants to
  test whether $\varrho$ is separable or $\eps$-far from all separable
  states in trace distance. We prove that $n = \Omega(d^2/\eps^2)$
  copies are necessary to test separability, assuming $\eps$ is not too
  small, viz.\ $\eps = \Omega(1/\sqrt{d})$.

  We also study completely positive distributions on the grid $[d]
  \times [d]$, as a classical analogue of separable states.  We
  analogously prove that $\Omega(d/\eps^2)$ samples from an unknown
  distribution $p$ are necessary to decide whether $p$ is completely
  positive or $\eps$-far from all completely positive distributions in
  total variation distance.  Towards showing that the true complexity
  may in fact be higher, we also show that learning an unknown
  completely positive distribution on $[d] \times [d]$ requires
  $\Omega(d^2/\eps^2)$ samples.
\end{abstract}

\section{Introduction}

A bipartite quantum state $\varrho$ on $\CC^d \tensor \CC^d$ is said to be
\emph{separable} if it can be written as a convex combination of product
states, meaning states of the form $\rho_1 \tensor \rho_2$ where
$\rho_1$ and $\rho_2$ are quantum states on $\CC^d$. Separable quantum
states are precisely those states which do not exhibit any form of
quantum entanglement. These are the only states that can be prepared by
separated parties who can only share classical
information. Understanding the general structure and properties of the
set of separable states in higher dimensions is a difficult problem and
is the subject of much ongoing research. For instance, deciding whether a
given $d^2 \by d^2$ matrix represents a separable state on
$\CC^d \tensor \CC^d$ -- also known as the \emph{separability problem}
in the quantum literature -- is NP-hard~\cite{Gurvits:2004}. In this
work, we study the following property testing version of the
separability problem:
\begin{quote}
  Provided unrestricted measurement access to $n$ copies of an unknown
  quantum state $\varrho$ on $\CC^d \tensor \CC^d$, decide with high
  probability if $\varrho$ is separable or $\eps$-far from all separable
  states in trace distance.
\end{quote}
The ultimate goal is to determine the number of copies of $\varrho$ that is
necessary and sufficient to solve the problem, up to constant factors,
as a function of $d$ and $\eps$.

By estimating (i.e., fully learning) $\varrho$ using recent algorithms
for quantum state tomography~\cite{Haah:2016, O'Donnell:2016} and
checking if the estimate is sufficiently close to a separable state,
this problem can be solved using $O(d^4/\eps^2)$ copies of $\varrho$. In
this paper, we prove a lower bound, showing that $\Omega(d^2/\eps^2)$
copies of $\varrho$ are necessary when $\eps = \Omega(1/\sqrt{d})$; this
reaches a lower bound of $\Omega(d^3)$ for $\eps = \Theta(1/\sqrt{d})$.
Closing the gap between the known bounds seems like a difficult problem,
and we have no particularly strong feeling about whether the tight bound
is the upper bound, the lower bound, or something in between.  (Indeed,
at least one paper~\cite{Aubrun:2017b} contains some evidence that
$\widetilde{\Theta}(d^3)$ might be the true complexity for
constant~$\eps$.)

Given the difficulty of closing the gap, we have sought a classical
analogue of the separability testing problem to try as a first step.
Analogies between quantum states and classical probability distributions
have proven to be a helpful source of inspiration throughout quantum
theory. Unfortunately, entanglement is understood to be a purely quantum
phenomenon; every finitely-supported discrete distribution can be
expressed as a convex combination of product point distributions, so
there are no ``entangled'' distributions. But motivated by the
characterization of separable quantum states using symmetric extensions
and the quantum de Finetti theorem~\cite{Doherty:2004}, we propose as a
kind of analogue the study of mixtures of i.i.d.\ bivariate
distributions, which arise in the classical de Finetti theorem. Doherty
et al.~\cite{Doherty:2004} used the quantum de Finetti theorem to show
that a quantum state $\varrho$ on $\CC^d \tensor \CC^d$ is separable
(i.e.\ a mixture of product states) if and only if $\varrho$ has a
symmetric extension to $\CC^d \tensor (\CC^d)^{\tensor k}$ for any
positive integer $k$. Somewhat analogously, the classical de Finetti
theorem states that a sequence of real random variables is a mixture of
i.i.d.\ sequences of random variables if and only if it is
exchangeable~\cite{Diaconis:1977}.

We call distributions which are mixtures of i.i.d.\ bivariate
distributions \emph{completely positive}, due to their connection with
completely positive matrices. We show that, given sample access to an
unknown distribution $p$ over $[d] \times [d]$, at least $\Omega(d/\eps^2)$
samples are necessary to decide with high probability if $p$ is
completely positive or $\eps$-far from all completely positive
distributions in total variation distance. Our proof is a
generalization of Paninski's lower bound for testing if a distribution
is uniform~\cite{Paninski:2008}.

Regarding upper bounds, one can again get a trivial upper bound of $O(d^2/\eps^2)$ samples for testing complete positivity, simply by fully estimating~$p$ to $\eps$-accuracy in total variation distance.  We again do not know how to close the gap between $\Omega(d/\eps^2)$ and $O(d^2/\eps^2)$, but we present evidence that the upper bound may be the true complexity.  Specifically, a common strategy for trying to test a family $D$ of distributions is to solve the problem of \emph{learning} an unknown distribution promised to be in~$D$.  We show that learning a completely positive distribution on $[d] \times [d]$ to accuracy~$\eps$ requires $\Omega(d^2/\eps^2)$ samples.  On the other hand, we are not able to show an analogous improved lower bound for learning separable quantum states.

\subsection{Previous work}

The property testing version of the separability problem, as defined
above, appears in~\cite{Montanaro:2016}, where a lower bound of
$\Omega(d^2)$ is proven for constant $\eps$. As
in~\cite{Montanaro:2016}, our proof also reduces the problem of testing
if a state is separable to the problem of testing if a state is the
maximally mixed state. However, we do not pass through the notion of
entanglement of formation, as~\cite{Montanaro:2016} does, and instead
rely on results about the convex structure of the set of separable
states. This approach yields a more direct proof that certain random
states are w.h.p.\ far from separable, which allows us to take advantage
of a lower bound from~\cite{O'Donnell:2015} (see~\Cref{thm:from-qst}).

We believe that the separability testing problem has seen further study, but
that there has been a lack of results due to its difficulty.  There \emph{is} a very extensive literature on the subject of entanglement detection (see e.g.~\cite{Guehne:2009, Horodecki:2009}), which is concerned with establishing different criteria for detecting or verifying entanglement. However, it
is not obvious how these results can be applied in the property testing
setting. In particular, few of these criteria are specifically concerned
with states that are far from separable in trace distance and many only
apply to certain restricted classes of quantum states.

As regards our classical analogue -- testing if a bipartite distribution is completely positive (mixture of i.i.d.)\ --  we are not aware of previous work in the literature.  The proof of our $\Omega(d/\eps^2)$ lower bound is inspired by, and generalizes, Paninski's lower bound for testing if a distribution is uniform~\cite{Paninski:2008}. The proof of our tight $\Omega(d^2/\eps^2)$ lower bound for learning completely positive distributions uses the Fano inequality method.

\subsection{Outline}

In~\Cref{sec:prelim} we cover background material on completely positive
distributions, quantum states and separability, and the property testing
framework that our results are concerned with. In~\Cref{sec:test-cp}, we
prove that testing if a distribution $p$ on $[d] \times [d]$ is
completely positive or $\eps$-far from all completely positive
distributions in total variation distance requires $\Omega(d/\eps^2)$
samples from $p$; we also prove that learning completely positive distributions requires $\Omega(d^2/\eps^2)$ samples. Finally, in~\Cref{sec:test-sep}, we show that testing
if a quantum state $\varrho$ on $\CC^d \tensor \CC^d$ is separable or
$\eps$-far from all separable states in trace distance requires
$\Omega(d^2/\eps^2)$ copies of $\varrho$ when
$\eps = \Omega(1/\sqrt{d})$.

\section{Preliminaries}\label{sec:prelim}

This section covers the mathematical background and notation used in the
rest of the paper.

\subsection{Completely positive distributions}

There is a well-developed theory of completely positive and copositive
matrices (see e.g.~\cite[Chapter 7]{Gaertner:2012}). In this section, we review
some known material.

Let $d$ be a positive integer. We consider distributions over the grid
$[d]^2 = [d] \times [d] = \set{(1, 1), (1, 2), \dotsc, (d, d)}$ which we
represent as matrices $A \in \RR^{d \by d}$ with $A_{ij}$ being the
probability of sampling $(i, j)$.

\begin{example}
  If $p \in \RR^d$ is a distribution on $[d] = \set{1, \dotsc, d}$
  represented as a column vector, then $p p^\tp$ is the natural i.i.d.\
  product probability distribution on $[d] \times [d]$ derived from $p$,
  with $p_i p_j$ being the probability of sampling $(i, j)$.
\end{example}

\begin{definition}\label{def:completely-positive}
  A matrix $A \in \RR^{d \by d}$ is \emph{completely positive} (CP) if
  there exist vectors $v_1, \dotsc, v_k \in \RR^d_{\ge 0}$ with
  nonnegative entries such that $A$ can be expressed as a convex
  combination of their projections $v_1 v_1^\tp, \dotsc, v_k v_k^\tp$,
  viz.
  \begin{align}\label{eqn:cp-def}
    A
    &= \sum_{i=1}^k c_i v_i v_i^\tp
  \end{align}
  for some nonnegative real numbers $c_1, \dotsc, c_k \in \RR$ with
  $c_1 + \dotsb + c_k = 1$.

  A distribution on $[d]^2$ represented as a matrix $A$ is
  \emph{completely positive} if $A$ is a CP matrix.
\end{definition}

\begin{remark}\label{prop:cp-cvx-comb}
  For a CP distribution $A$, the vectors $v_i$ in~\Cref{eqn:cp-def} may
  be taken to be probability distributions, since one can replace $v_i$
  by $v_i / \norm{v_i}_1$ and $c_i$ by $c_i \norm{v_i}_1^2$. Thus, CP
  distributions are precisely the mixtures of i.i.d.\ distributions.
\end{remark}

It follows immediately from~\Cref{def:completely-positive} that a CP
matrix $A$ satisfies three basic properties:
\begin{enumerate}[label=(\roman*)]
\item $A$ is symmetric ($A^\tp = A$),
\item $A_{ij} \ge 0$ for all $i, j \in [d]$, and
\item $A$ is positive semidefinite (PSD), denoted $A \ge 0$.
\end{enumerate}
A matrix satisfying these three properties is called \emph{doubly
  nonnegative}. However, if $d \ge 5$, then there exist doubly
nonnegative matrices which are not completely
positive~\cite{Maxfield:1962}.

\begin{example}
  Let $J$ denote the $d \by d$ matrix with $J_{ij} = 1$ for all
  $i, j \in [d]$ and let $\Unif = J/d^2$ denote the uniform
  distribution on $[d]^2$. Since
  $\Unif = (\frac{1}{d}, \dotsc, \frac{1}{d}) (\frac{1}{d}, \dotsc,
  \frac{1}{d})^\tp$, the uniform distribution on $[d]^2$ is completely
  positive.
\end{example}

Let $\CP_d$ denote the set of completely positive $d \by d$ matrices and
let $\CPD_d$ denote its subset of completely positive distributions on
$[d]^2$. It is well known that $\CP_d$ is a cone and that its dual cone
consists of \emph{copositive} matrices, i.e.\ matrices $M$ such that
$x^\tp M x \ge 0$ for all nonnegative vectors $x \in \RR^d_{\ge
  0}$. Thus, by cone duality, if $B \not\in \CP_d$ is a non-CP matrix,
then there exists a copositive matrix $W$ such that $\tr(AW) \ge 0$ for
all $A \in \CP_d$ and $\tr(BW) < 0$. This result yields witnesses
certifying nonmembership in $\CPD_d$. However, its usefulness is limited
by the fact that it provides no quantitative information about how far a
nonmember $A$ is from the set $\CPD_d$.

In what follows, we interpret distributions on $[d]^2$ as weighted
directed graphs with self-loops and obtain a sufficient condition for a
distribution to be $\eps$-far in total variation distance from $\CPD_d$
in terms of the maximum value of a cut in the corresponding graph.

We interpret a distribution $A$ on $[d]^2$ as a weighted directed graph
$G$ with vertices $V(G) = [d]$ and edges
\begin{align*}
  E(G)
  &= \set{(i, j) \in [d]^2 \mid A_{ij} > 0}.
\end{align*}
A \emph{cut} $x \in \set{\pm 1}^d$ in $G$ is a bipartition of the
vertices $V(G) = E_1 \cup E_2$ with $E_1 = \set{i \in [d] \mid x_i < 0}$
and $E_2 = \set{i \in [d] \mid x_i > 0}$. The total weight of edges cut
by this bipartition is
\begin{align*}
  \sum_{(i, j) \in [d]^2} \half[1 - x_i x_j] A_{ij}
  &= \E_{(\bs i, \bs j) \sim A} \half[1 - x_{\bs{i}} x_{\bs{j}}]
    = \half - \half \E_{(\bs i, \bs j) \sim A} x_{\bs{i}} x_{\bs{j}}
    = \half - \half x^\tp A x.
\end{align*}
In particular, if $A = p p^\tp$ with $p \in \RR^d$, then
\begin{align*}
  x^\tp A x
  &= x^\tp p p^\tp x
    = (x^\tp p)^2
    \ge 0.
\end{align*}
By~\Cref{prop:cp-cvx-comb}, a CP distribution is a convex combination of
matrices of the form $p p^\tp$. Thus, the following holds:
\begin{proposition}
  If $A$ is a CP distribution, then the total weight of a cut in the
  graph represented by $A$ is at most $\half$.
\end{proposition}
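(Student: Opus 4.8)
The plan is to prove this directly from the convexity structure already established. By \Cref{prop:cp-cvx-comb}, any CP distribution $A$ can be written as $A = \sum_{i=1}^k c_i p_i p_i^\tp$ with $c_i \ge 0$, $\sum_i c_i = 1$, and each $p_i \in \RR^d_{\ge 0}$ a probability distribution. Fix an arbitrary cut $x \in \set{\pm 1}^d$. The total weight of the cut in the graph represented by $A$ was computed in the preceding display to equal $\half - \half x^\tp A x$, so it suffices to show $x^\tp A x \ge 0$.

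First I would expand $x^\tp A x = \sum_{i=1}^k c_i\, x^\tp p_i p_i^\tp x = \sum_{i=1}^k c_i (x^\tp p_i)^2$, using linearity and the already-recorded identity $x^\tp p p^\tp x = (x^\tp p)^2$. Each term $c_i (x^\tp p_i)^2$ is nonnegative since $c_i \ge 0$ and the square is nonnegative, so the whole sum is nonnegative, giving $x^\tp A x \ge 0$. Substituting back, the cut weight is $\half - \half x^\tp A x \le \half$. Since $x$ was an arbitrary cut, every cut in the graph of $A$ has weight at most $\half$, which is the claim.

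There is essentially no obstacle here: the statement is an immediate corollary of the two displayed computations that precede it (the cut-weight formula and the observation that $pp^\tp$ is PSD with $x^\tp pp^\tp x = (x^\tp p)^2$), combined with the convex-decomposition \Cref{prop:cp-cvx-comb}. The only thing worth a word of care is that one should quantify over \emph{all} cuts $x$ rather than argue for a single one, and that one should note the bound $\half$ is attained — e.g.\ by $A = \Unif$ together with a balanced cut (when $d$ is even) — so that the constant cannot be improved; this is exactly the feature that will be exploited later to build distributions far from $\CPD_d$.
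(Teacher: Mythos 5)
Your proof is correct and is essentially identical to the paper's argument: the paper also combines the displayed cut-weight identity $\half - \half x^\tp A x$ with the observation $x^\tp (pp^\tp) x = (x^\tp p)^2 \ge 0$ and the convex decomposition from~\Cref{prop:cp-cvx-comb}. You merely write out the convexity step explicitly, which the paper leaves implicit.
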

This fact allows us to prove the following result which gives a
sufficient condition for a distribution to be $\eps$-far from all CP
distributions in $\ell^1$ distance.  (The matrix norms in the following
are entrywise.)

\begin{proposition}\label{prop:eps-far-cp}
  Let $A$ be a distribution on $[d]^2$. If there exists a cut
  $x \in \set{\pm 1}^d$ with $x^\tp A x \le - \eps$, then
  $\norm{B - A}_1 \ge \eps$ for all $B \in \CPD_d$.
\end{proposition}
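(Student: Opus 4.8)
The plan is to combine the preceding proposition with the elementary observation that for a $\pm 1$ vector $x$ the quadratic form $x^\tp M x$ is controlled by the entrywise $\ell^1$ norm of $M$. The preceding proposition, which says that a CP distribution has every cut of weight at most $\half$, is exactly the statement that $x^\tp B x \ge 0$ for every $B \in \CPD_d$ and every $x \in \set{\pm 1}^d$, since the weight of the cut $x$ equals $\half - \half x^\tp B x$. This nonnegativity is the only structural fact about $\CPD_d$ that the argument uses.

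First I would fix an arbitrary $B \in \CPD_d$ and take $x \in \set{\pm 1}^d$ to be the cut guaranteed by the hypothesis, so that $x^\tp A x \le -\eps$. Subtracting this from $x^\tp B x \ge 0$ gives $x^\tp(B - A)x = x^\tp B x - x^\tp A x \ge \eps$. Second, I would bound the same quantity from above: writing $M = B - A$ and using $|x_i| = 1$ for every $i$, the triangle inequality yields $x^\tp M x = \sum_{i,j} x_i x_j M_{ij} \le \sum_{i,j} |M_{ij}| = \norm{M}_1$. Chaining the two bounds gives $\norm{B - A}_1 \ge x^\tp(B - A)x \ge \eps$, and since $B \in \CPD_d$ was arbitrary, this is precisely the claimed conclusion.

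I do not anticipate any real obstacle here; the whole proof is a two-line chain of inequalities once the preceding proposition is in hand. The only points that deserve a moment's care are making explicit the equivalence between ``every cut has weight at most $\half$'' and ``$x^\tp B x \ge 0$ for all $x \in \set{\pm 1}^d$'', and observing that the entrywise matrix $1$-norm used in the statement is exactly $\sum_{i,j}|M_{ij}|$, so that the $\pm 1$-vector triangle-inequality estimate produces $\norm{M}_1$ on the nose.
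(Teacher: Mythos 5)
Your proof is correct and is essentially the paper's argument: the paper applies H\"{o}lder's inequality with the entrywise $\infty$-norm witness $U = xx^\tp$, which is exactly your triangle-inequality bound $x^\tp(B-A)x \le \norm{B-A}_1$, combined with the same two facts $x^\tp B x \ge 0$ and $x^\tp A x \le -\eps$. No gaps.
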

\begin{proof}
  Let $B \in \CPD_d$ be arbitrary. By H\"{o}lder's inequality, for all
  $U \in \RR^{d \by d}$ with $\norm{U}_\infty = 1$,
  \begin{align*}
    \norm{B - A}_1
    &\ge \tr(U^\tp (B - A))
      = \tr(U^\tp B) - \tr(U^\tp A).
  \end{align*}
  Let $U = x x^\tp$. Since $x^\tp B x \ge 0$ and
  $\tr(U^\tp A) = x^\tp A x \le - \eps$,
  \begin{align*}
    \norm{B - A}_1
    &\ge x^\tp B x - x^\tp A x
      \ge \eps. \qedhere
  \end{align*}
\end{proof}

\subsection{Quantum states and separability}

This section serves as a brief introduction to quantum states and
separability. For a more comprehensive introduction, see
e.g.~\cite{Watrous:2018}.

We work over $\CC$ and use bra--ket notation to denote vectors in
$\CC^d$, viz.\ for all vectors $x, y \in \CC^d$ and matrices
$A \in \CC^{d \by d}$, $\ket{x} = x$, $\bra{x} = x^\dag = \bar{x}^\tp$,
$\bra{x \tensor y} = \bra{x} \tensor \bra{y}$,
$\ket{x \tensor y} = \ket{x} \tensor \ket{y}$,
$\braket{x}{y} = x^\dag y$, $\ketbra{x}{y} = x y^\dag$, and
$\bra{x} A \ket{y} = x^\dag A y$.

\begin{definition}
  A \emph{quantum state} $\rho$ on $\CC^d$ is a positive semidefinite
  matrix $\rho \in \CC^{d \by d}$ with $\tr(\rho) = 1$. A
  \emph{measurement} is a set $\set{E_1, \dotsc, E_k}$ of positive
  semidefinite matrices on $\CC^d$ with $E_1 + \dotsb + E_k = \unit$,
  where $\unit$ denotes the identity matrix.
\end{definition}

Let $\rho$ and $\set{E_1, \dotsc, E_k}$ be as in the definition above
and let $p_i = \tr(\rho E_i)$ for $i = 1, \dotsc, k$. Since $\rho$ and
the $E_i$ are PSD, $p_i \ge 0$ for all $i = 1, \dotsc, k$, and
\begin{align*}
  p_1 + \dotsb + p_k
  &= \tr(\rho E_1) + \dotsb + \tr(\rho E_k)
    = \tr(\rho (E_1 + \dotsb + E_k))
    = \tr(\rho)
    = 1.
\end{align*}
Hence, $(p_1, \dotsc, p_k)$ is a distribution on $[k]$. Applying the
measurement $\set{E_1, \dotsc, E_k}$ to the quantum state $\rho$ yields
outcome $i \in [k]$ with probability $p_i = \tr(\rho E_i)$.

\begin{example}
  $\frac{\unit}{d}$ is a quantum state on $\CC^d$ called the
  \emph{maximally mixed state}; it is analogous to the uniform
  distribution on $[d]$.
\end{example}

\begin{definition}
  A state of the form $\rho = \ketbra{x}{x}$ for some $x \in \CC^d$ is
  called a \emph{pure} state.
\end{definition}

Given quantum states $\rho$ and $\sigma$ on $\CC^d$, the tensor product
$\rho \tensor \sigma$ is a quantum state on $\CC^d \tensor \CC^d$. If
$\rho$ and $\sigma$ represent the individual states of two isolated
particles, then $\rho \tensor \sigma$ is the state of the physical
system comprising both particles. Thus, the system composed of $n$
identical copies of the state $\rho$ is represented as the state
$\rho^{\tensor n}$ on $(\CC^d)^{\tensor n}$.

\begin{definition}
  A quantum state $\varrho$ on $\CC^d \tensor \CC^d$ is \emph{separable}
  if $\varrho$ can be expressed as a convex combination of product
  states, viz.\
  \begin{align*}
    \varrho
    &= \sum_{i=1}^k c_i \rho_i \tensor \sigma_i,
  \end{align*}
  where $\rho_i$ and $\sigma_i$ are states on $\CC^d$ for
  $i = 1, \dotsc, k$ and $c_1, \dotsc, c_k \in \RR_{\ge 0}$ satisfy
  $c_1 + \dotsc + c_k = 1$. Thus, the physical system represented by
  $\varrho$ may be regarded as being in the state
  $\rho_i \tensor \sigma_i$ with probability $c_i$.

  A state that is not separable is called \emph{entangled}.
\end{definition}

\begin{example}
  Since $\frac{\unit}{d^2} = \frac{\unit}{d} \tensor \frac{\unit}{d}$,
  the maximally mixed state is separable.
\end{example}

\begin{definition}
  Let $\Sep$ denote the set of separable states on $\CC^d \tensor \CC^d$
  and let $\Sep_{\pm}$ denote its \emph{cylindrical symmetrization}
  (cf.~\cite[p.~81]{Aubrun:2017}), viz.\  $\Sep_{\pm} = \conv(\Sep \cup (- \Sep))$, where $\conv(E)$ denotes the convex hull of the set $E$.
\end{definition}

Similar to the duality between completely positive and copositive
matrices, the set $\Sep$ generates a cone of separable operators whose
dual is the cone of \emph{block-positive} operators (see
e.g.~\cite{Aubrun:2017}). A block-positive operator acts as an
entanglement witness certifying that a given quantum state is not
separable. Thus,~\Cref{prop:holder-distance} in \Cref{sec:test-sep} is comparable
to~\Cref{prop:eps-far-cp} in that it describes witnesses certifying that
a quantum state is not just entangled but actually $\eps$-far from all
separable states in trace distance.

\subsection{The property testing framework}

In the property testing model, we have a set $\cO$ of objects and also a
distance function \mbox{$\dist : \cO \times \cO \to \RR$}. A \emph{property}
$\cP$ is a subset of $\cO$ and the distance between an object
$x \in \cO$ and the property $\cP$ is defined by
$\dist(x, \cP) = \inf_{y \in \cP} \dist(x, y)$. An algorithm $\cT$ is
said to test $\cP$ if, given some type of access to $x \in \cO$ (e.g.\
independent samples or identical copies), $\cT$ accepts $x$ w.h.p.\ when $x \in
\cP$ and $\cT$ rejects $x$ w.h.p.\ when $\dist(x, \cP) \ge \eps$.

In~\Cref{sec:test-cp}, $\cO$ is the set of distributions on
$[d] \times [d]$, $\dist$ is the total variation distance, and
$\cP = \CPD_d \subset \cO$ is the set of CP distributions. Given samples
$\bs{x}_1, \dotsc, \bs{x}_n$ from a distribution $p$ on $[d]^2$, a
testing algorithm $\cT$ for $\CPD_d$ satisfies
\begin{align*}
  p \in \CPD_d &\implies \Pr[\cT(\bs{x}_1, \dotsc, \bs{x}_n)\ \text{accepts}] \ge
  \frac{2}{3}, \\
  p \text{ $\eps$-far from } \CPD_d &\implies \Pr[\cT(\bs{x}_1, \dotsc, \bs{x}_n)\ \text{accepts}] \le
  \frac{1}{3}.
\end{align*}

In~\Cref{sec:test-sep}, $\cO$ is the set of quantum states on
$\CC^d \tensor \CC^d$,
$\dist(\varrho, \sigma) = \half \norm{\varrho - \sigma}_1$ is the trace
distance between quantum states, and $\cP = \Sep$ is the set of
separable states on $\CC^d \tensor \CC^d$. Given measurement access to
$n$ copies $\varrho^{\tensor n}$ of a state
$\varrho \in \CC^d \tensor \CC^d$, a testing algorithm for $\Sep$ is a
two-outcome measurement $\set{E_0, E_1}$ on $(\CC^d)^{\tensor n}$
satisfying:
\begin{align*}
  \varrho \in \Sep &\implies \tr(E_1 \varrho^{\tensor n}) = 1] \ge
  \frac{2}{3}, \\
  \varrho \ \text{$\eps$-far from} \ \Sep &\implies \tr(E_1 \varrho^{\tensor n}) = 1] \le \frac{1}{3}.
\end{align*}

\section{Testing complete positivity}\label{sec:test-cp}

Let $d$ be a positive integer. If $d$ is odd, we can reduce to the case
of $d - 1$ by using distributions that don't involve outcome $d \in [d]$, and the asymptotics of
$\Omega(d/\eps^2)$ remain unchanged. Hence we may assume, without loss
of generality, that $d$ is even.

We begin by defining a family of distributions on $[d]^2$ which are
$\eps$-far from $\CPD_d$. Let $S \subset [d]$ be a subset of size
$\abs{S} = \half[d]$. Thus, $\abs{S^\co} = \half[d]$ and
\begin{align*}
  \abs{S \times S^\co \cup S^\co \times S}
  &= \abs{S \times S^\co} + \abs{S^\co \times S}
    = \half[d^2].
\end{align*}
Let $\phi_S : [d]^2 \to \RR$ be the function defined by
\begin{align*}
  \phi_S(x) =
  \begin{cases}
    1 + \eps, & x \in S \times S^\co \cup S^\co \times S \\
    1 - \eps, & \text{otherwise}.
  \end{cases}
\end{align*}
Hence,
\begin{align*}
  \avg_{x \in [d]^2} \phi_S(x)
  &= \frac{1}{d^2} \paren*{\half[d^2] (1 + \eps) + \half[d^2] (1 -
    \eps)}
    = 1.
\end{align*}
So we may think of $\phi_S$ as a density function with respect to the
uniform distribution on $[d]^2$.

Let $x \in \set{\pm 1}^d$ be defined as follows: for all $i \in [d]$, if
$i \in S$, then $x_i = 1$, otherwise $x_i = -1$. Let $A^S$ be the matrix
defined by $A^S_{ij} = \phi_S((i, j)) / d^2$. Thus, $A^S$ is a symmetric
distribution on $[d]^2$ and $x$ is a cut. The total weight of this cut
is
\begin{align*}
  \half[d^2] \cdot \frac{1 + \eps}{d^2}
  = \half + \half[\eps].
\end{align*}
Therefore, for every subset $S \subset [d]$, the distribution $A^S$ is
\emph{not} completely positive. Moreover, $x^\tp A^S x = - \eps$, so,
by~\Cref{prop:eps-far-cp},
\begin{align*}
  \norm{A^S - B}_1 \ge \eps
\end{align*}
for every CP distribution $B$ (where the matrix norm is entry-wise). In other words, for every subset
$S \subset [d]$ with $\abs{S} = \half[d]$, $A^S$ is a distribution on
$[d]^2$ which is $\eps$-far in $\ell^1$ distance from every CP
distribution on $[d]^2$.

Fix $\Omega = [d]^2$ and let $\phi : \Omega^n \to \RR$ denote the
function defined by
\begin{align*}
  \phi(x)
  &= \avg_{\substack{S \subset [d] \\ \abs{S} = d/2}} \phi_S(x_1) \dotsm \phi_S(x_n).
\end{align*}
Let $\cD_n$ denote the distribution on $\Omega^n$ defined by the density
$\phi$ and let $d_{\chi^2}(\place, \place)$ denote the $\chi^2$-distance
between probability distributions, i.e.\ for distributions $\cP$ and
$\cQ$ on $\Omega$,
\begin{align*}
  d_{\chi^2}(\cP, \cQ)
    &= \E_{\bs{x} \sim \cQ} \bracket*{\paren*{\frac{\cP(\bs x)}{\cQ(\bs x)} - 1}^2}.
\end{align*}

The following proposition will be shown to imply our lower bound:
\begin{proposition}\label{prop:chi-squared-lower-bound}
  If $d_{\chi^2}(\cD_n, \Unif^{\tensor n}) \ge \frac{1}{3}$, then
  $n = \Omega(d / \eps^2)$.
\end{proposition}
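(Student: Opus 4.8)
The plan is a standard second-moment ($\chi^2$) computation, of the kind underlying Paninski's uniformity lower bound and of Ingster-type arguments more generally. Since $\phi$ is the density of $\cD_n$ with respect to $\Unif^{\tensor n}$ (note $\Unif^{\tensor n}$ is just the uniform distribution on $([d]^2)^n = \Omega^n$), the definition of the $\chi^2$-distance gives
\[
  d_{\chi^2}(\cD_n, \Unif^{\tensor n}) = \E_{\bs x \sim \Unif^{\tensor n}}[\phi(\bs x)^2] - 1 .
\]
Expanding $\phi(\bs x)^2$ as an average over two independent uniformly random subsets $S, T \subset [d]$ of size $d/2$, and using that the coordinates of $\bs x$ are i.i.d.\ uniform on $[d]^2$, this equals
\[
  \E_{S, T}\left[\left(\E_{\bs z \sim \Unif}[\phi_S(\bs z)\,\phi_T(\bs z)]\right)^{n}\right] - 1 ,
\]
so the first task is to evaluate the inner expectation. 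Writing $\phi_S((i,j)) = 1 - \eps\, x^S_i x^S_j$, where $x^S \in \set{\pm 1}^d$ is the $\pm 1$-indicator of $S$ (and likewise $x^T$ for $T$), and expanding the product $\phi_S \phi_T$: the two terms linear in $\eps$ vanish in expectation because $\sum_i x^S_i = 0$ --- this is exactly where $\abs{S} = d/2$, i.e.\ the reduction to even $d$, is used --- while the $\eps^2$ term contributes $\tau(S,T)^2$, where $\tau(S,T) := \frac1d \langle x^S, x^T \rangle$. Thus $\E_{\bs z \sim \Unif}[\phi_S(\bs z)\phi_T(\bs z)] = 1 + \eps^2 \tau(S,T)^2$, and expanding the $n$-th power via the binomial theorem,
\[
  d_{\chi^2}(\cD_n, \Unif^{\tensor n}) = \E_{S, T}\left[(1 + \eps^2 \tau(S,T)^2)^n\right] - 1 = \sum_{m=1}^{n} \binom{n}{m} \eps^{2m}\, \E_{S, T}\left[\tau(S,T)^{2m}\right] .
\]

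The crux is then to bound the moments $\E_{S,T}[\tau(S,T)^{2m}]$, i.e.\ to show that the correlation of two independent uniformly random balanced subsets of $[d]$ is typically of order $1/\sqrt{d}$. By symmetry one may fix $S$; then $\langle x^S, x^T \rangle = 4\abs{S \cap T} - d$, and $\abs{S \cap T}$ is hypergeometric with mean $d/4$ and variance $\Theta(d)$. I would control its even central moments by comparison with the binomial: by Hoeffding's theorem that sampling without replacement is dominated in the convex order by sampling with replacement, $\E[(\abs{S \cap T} - d/4)^{2m}] \le \E[(K - d/4)^{2m}]$ for $K \sim \mathrm{Bin}(d/2, 1/2)$; and the latter equals $4^{-m}$ times the $2m$-th moment of a sum of $d/2$ i.i.d.\ Rademacher signs, which by Khintchine's inequality is at most $(md)^m$. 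This yields $\E_{S,T}[\tau(S,T)^{2m}] \le (Cm/d)^m$ for an absolute constant $C$. (Alternatively one can invoke the negative association of the indicators $\mathbf{1}[i \in T]$ in place of Hoeffding's theorem and obtain the same bound.)

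Substituting this, together with $\binom{n}{m} \le n^m/m!$ and $m! \ge (m/e)^m$, gives
\[
  d_{\chi^2}(\cD_n, \Unif^{\tensor n}) \le \sum_{m=1}^{\infty} \frac{(C n \eps^2 m / d)^m}{m!} \le \sum_{m=1}^{\infty} \left(\frac{C e\, n \eps^2}{d}\right)^{m} = \frac{C e\, n \eps^2 / d}{1 - C e\, n \eps^2 / d}
\]
as soon as $C e\, n \eps^2 / d < 1$. Hence there is a small absolute constant $c > 0$ for which $n \le c\, d/\eps^2$ implies $d_{\chi^2}(\cD_n, \Unif^{\tensor n}) < \frac13$; contrapositively, $d_{\chi^2}(\cD_n, \Unif^{\tensor n}) \ge \frac13$ forces $n = \Omega(d/\eps^2)$, which is the claim.

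The step I expect to require real care is the moment bound $\E_{S,T}[\tau(S,T)^{2m}] \le (Cm/d)^m$ --- equivalently, the sub-Gaussian concentration of the balanced-subset correlation at scale $1/\sqrt{d}$. The point is that the rare pairs $(S,T)$ with $\abs{\tau(S,T)}$ close to $1$ each contribute a factor as large as $(1+\eps^2)^n = e^{\Theta(n\eps^2)}$, which can be exponentially large in $d$, so it is essential that the probability of such pairs be exponentially small in $d$; this is precisely what the hypergeometric moment (or MGF) bound encodes, and a crude single-threshold truncation of $\tau$ does not suffice.
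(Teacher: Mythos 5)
Your proof is correct, and it follows the same overall second-moment strategy as the paper: both reduce $d_{\chi^2}(\cD_n, \Unif^{\tensor n})$ to $\E_{S,T}\bigl[(1+\eps^2\tau(S,T)^2)^n\bigr] - 1$ via the identical expansion (including the observation that the linear-in-$\eps$ terms vanish because $\abs{S}=d/2$, and that $\tau$ is an affine function of the hypergeometric overlap $\abs{S\cap T}$). Where you diverge is in how this last expectation is controlled. The paper bounds $(1+\eps^2\tau^2)^n \le \exp(n\eps^2\tau^2)$ and then integrates the tail, $\E[e^{n\eps^2\tau^2}]-1 = \int_0^\infty \Pr[e^{n\eps^2\tau^2}-1\ge t]\,dt$, invoking a Hoeffding-type exponential tail bound for the hypergeometric distribution; the resulting integral evaluates exactly to $1/(c-1)$ with $c = d/(4n\eps^2)$, giving the explicit constant $n \ge d/(16\eps^2)$. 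You instead expand $(1+\eps^2\tau^2)^n$ by the binomial theorem and bound the even moments $\E[\tau^{2m}] \le (Cm/d)^m$ via Hoeffding's convex-order domination of sampling without replacement by sampling with replacement plus Khintchine, then sum a geometric series. Both routes encode the same essential fact --- sub-Gaussian concentration of $\tau$ at scale $1/\sqrt{d}$ --- and both correctly avoid the trap you flag (a single truncation of $\tau$ does not suffice because the rare highly-correlated pairs contribute $e^{\Theta(n\eps^2)}$). Your moment version is arguably slightly cleaner in one respect: it automatically handles both tails of $\abs{S\cap T}$ since $\tau^{2m}$ is even, whereas the paper's tail-integral step as written only invokes the upper tail of the hypergeometric; the trade-off is that the paper's computation yields a sharper explicit constant. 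One cosmetic note: your claimed final bound $x/(1-x)$ with $x = Cen\eps^2/d$ equals exactly $1/3$ at $x = 1/4$, so the constant $c$ in ``$n \le c\,d/\eps^2$'' must be taken strictly below $1/(4Ce)$; this is harmless but worth stating.
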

\begin{proof}
  Let $\cH$ denote the uniform distribution over subsets $S
  \subset [d]$ with $\abs{S} = d/2$. Thus,
  \begin{align*}
    d_{\chi^2}(\cD_n, \Unif^{\tensor n})
    &= \paren*{ \sum_{x \in \Omega^n} \frac{\cD_n(x)^2}{\Unif^{\tensor
      n}(x)}} - 1 \\
    &= \paren*{ \sum_{x \in \Omega^n} \frac{\phi(x)^2}{d^{2n}}} - 1 \\
    &= \E_{{\bs x} \sim \Unif^{\tensor n}} \phi(\bs x)^2 - 1 \\
    &= \E_{{\bs x} \sim \Unif^{\tensor n}} \bracket*{\paren*{\E_{\bs{S} \sim \cH}
      \phi_{\bs S}(\bs{x}_1) \dotsm
      \phi_{\bs S}(\bs{x}_n)}^2} - 1 \\
    &= \E_{{\bs x} \sim \Unif^{\tensor n}} \bracket*{\E_{\bs{S}, \bs{S'} \sim \cH}
      \phi_{\bs S}(\bs{x}_1) \dotsm \phi_{\bs S}(\bs{x}_n)
      \phi_{\bs{S'}}(\bs{x}_1) \dotsm \phi_{\bs{S'}}(\bs{x}_n)} - 1 \\
    &= \E_{\bs{S}, \bs{S'} \sim \cH} \E_{{\bs x} \sim \Unif^{\tensor n}}
      \phi_{\bs S}(\bs{x}_1) \dotsm \phi_{\bs S}(\bs{x}_n)
      \phi_{\bs{S'}}(\bs{x}_1) \dotsm \phi_{\bs{S'}}(\bs{x}_n) - 1 \\
    &= \E_{\bs{S}, \bs{S'} \sim \cH} \bracket*{\paren*{\E_{{\bs x} \sim \Unif}
      \phi_{\bs S}(\bs{x}) \phi_{\bs{S'}}(\bs{x})}^n} - 1.
  \end{align*}
  For a subset $E \subset [d]$, let $\chi_E$ be the $\pm 1$-valued
  indicator function defined by $\chi_E(x) = 1$ if $x \in E$ and
  $\chi_E(x) = -1$ otherwise. Note that
  $\phi_E(x) = 1 - \chi_{E}(x_1) \chi_{E}(x_2) \eps$ for all
  $x \in \Omega$. Hence,
  \begin{align*}
    \phi_{\bs S}(\bs{x}) \phi_{\bs S'}(\bs{x})
    &= 1 - (\chi_{\bs S}(\bs{x}_1) \chi_{\bs S}(\bs{x}_2) + \chi_{\bs S'}(\bs{x}_1)
      \chi_{\bs S'}(\bs{x}_2))\eps
      + \chi_{\bs S}(\bs{x}_1) \chi_{\bs S}(\bs{x}_2) \chi_{\bs S'}(\bs{x}_1) \chi_{\bs
      S'}(\bs{x}_2) \eps^2.
  \end{align*}
  For a fixed outcome of $\bs{S}$ and $\bs{x}$ uniformly random,
  $\chi_{\bs S}(\bs{x}_1)$ and $\chi_{\bs S}(\bs{x}_2)$ are independent
  uniform $\pm 1$-valued bits. So, in expectation, the terms involving
  just $\eps$ in the expression above drop out. Moreover, $\chi_{\bs
    S}(\bs{x}_1) \chi_{\bs S'}(\bs{x}_1)$ and $\chi_{\bs S}(\bs{x}_2)
  \chi_{\bs S'}(\bs{x}_2)$ are independent. Hence,
  \begin{align*}
    \E_{{\bs x} \sim \Unif}
      \phi_{\bs S}(\bs{x}) \phi_{\bs{S'}}(\bs{x})
    &= 1 - \eps^2 \cdot \paren*{\E_{{\bs x} \sim \Unif} \chi_{\bs S}(\bs{x}_1) \chi_{\bs S'}(\bs{x}_1)}^2
  \end{align*}
  Let $\bs{r} = \abs{\bs{S} \cap \bs{S'}}$, where
  $\bs{S}, \bs{S'} \sim \cH$, and let $\bs{\delta}$ denote the mean of
  $\chi_{\bs S}(\bs{x}_1) \chi_{\bs S'}(\bs{x}_1)$ appearing above. It
  is easy to check that $\bs{\delta} = 4 \bs{r} / d - 1$. Thus,
\begin{align*}
  d_{\chi^2}(\cD_n, \Unif^{\tensor n})
  &\le \E_{\bs{S}, \bs{S'} \sim \cH} \bracket*{\paren*{1 + \eps^2
    \bs{\delta}^2}^n} - 1 \\
  &\le \E_{\bs{S}, \bs{S'} \sim \cH} \bracket*{\exp(\eps^2
    \bs{\delta}^2)^n} - 1 \\
  &= \E_{\bs{S}, \bs{S'} \sim \cH} \bracket*{\exp(n \eps^2
    \bs{\delta}^2)} - 1.
\end{align*}
Since $\exp(n \eps^2 \bs{\delta}^2) - 1 \ge 0$,
\begin{align*}
  \E_{\bs{S}, \bs{S'} \sim \cH} \bracket*{\exp(n \eps^2
  \bs{\delta})} - 1
  &= \int_0^\infty \Pr_{\bs{S}, \bs{S'} \sim \cH} \bracket*{\exp(n
    \eps^2 \bs{\delta}^2) - 1 \ge t} dt.
\end{align*}
Since $\exp(n \eps^2 \bs{\delta}^2) - 1 \ge t$ is equivalent to
\begin{align*}
  \bs{r}
  &\ge \frac{d}{4} + \frac{d}{4} \cdot \paren*{\frac{\log(1 + t)}{n
    \eps^2}}^{\half}
\end{align*}
it follows that
\begin{align*}
  d_{\chi^2}(\cD_n, \Unif^{\tensor n})
  &\le \int_0^\infty \Pr_{\bs{S}, \bs{S'} \sim \cH} \bracket*{\bs{r} \ge
    \frac{d}{4} + \frac{d}{4} \sqrt{f(t)}} dt,
\end{align*}
where $f(t) = \log(1 + t) / n \eps^2$.

Since $\bs{r} = \abs{\bs{S} \cap \bs{S'}}$ is invariant under
permutations of $[d]$, it follows that $\bs{r}$ is distributed according
to the hypergeometric distribution with $d/2$ draws from a set of $d$
elements with $d/2$ successes. If $\bs{X}$ is a random variable
distributed according to the hypergeometric distribution with $m$ draws
from a set of $N$ elements with $k$ successes, then (see
e.g.~\cite{Skala:2013})
\begin{align*}
  \Pr \bracket*{\frac{\bs{X}}{m} \ge \frac{k}{N} + s} \le \exp(-2s^2m).
\end{align*}
Hence,
\begin{align*}
  \Pr_{\bs{S}, \bs{S}' \sim \cH} \bracket*{\bs{r} \cdot \frac{2}{d}
  \ge \half + t}
  &= \Pr_{\bs{S}, \bs{S}' \sim \cH} \bracket*{\bs{r}
    \ge \frac{d}{4} + \frac{dt}{2}}
    \le \exp(-dt^2),
\end{align*}
whence,
\begin{align*}
  \Pr_{\bs{S}, \bs{S}' \sim \cH} \bracket*{\bs{r}
  \ge \frac{d}{4}(\sqrt{f(t)} + 1)}
  &\le \exp(-df(t)/4).
\end{align*}
Therefore,
\begin{align*}
  d_{\chi^2}(\cD_n, \Unif^{\tensor n})
  &\le \int_0^\infty \exp(-df(t)/4) dt \\
  &= \int_0^\infty \exp \paren*{- \frac{d}{4 n \eps^2} \log (1 + t)} dt
  \\
  &= \int_0^\infty \paren*{\frac{1}{1 + t}}^c dt, \\
  &= \frac{1}{c - 1},
\end{align*}
where $c = d / 4 n \eps^2$. Since $d_{\chi^2}(\cD_n, \Unif^{\tensor n})
\ge 1/3$, it follows that $c \le 4$, so $n \ge d / 16
\eps^2$. Therefore, $n = \Omega(d / \eps^2)$, as needed.
\end{proof}

Let $d_{\TV}(\place, \place)$ denote the total variation distance
between probability distributions. Let $p \in \CPD_d$ and let $q$ be a
distribution $\eps$-far from $\CPD_d$.

A testing algorithm $f : ([d]^2)^n \to \Bin$ for complete positivity
determines a probability event $E \subset ([d]^2)^n$ satisfying
$p^{\tensor n}(E) \ge 2/3$ and $q^{\tensor n}(E) \le 1/3$. Hence,
$\Unif^{\tensor n}(E) \ge 2/3$ and, since $\cD_n$ is supported on
distributions $\eps$-far from $\CPD_d$, $\cD_n(E) \le 1/3$. Therefore,
$d_{\TV}(\cD_n, \Unif^{\tensor n}) \ge 1/3$ and the following
corollary establishes the lower bound:

\begin{corollary}
  If $d_{\TV}(\cD_n, \Unif^{\tensor n}) \ge 1/3$, then
  $n = \Omega(d/\eps^2)$.
\end{corollary}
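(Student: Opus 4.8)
The plan is to deduce the total-variation statement from the $\chi^2$ statement in~\Cref{prop:chi-squared-lower-bound} via the standard inequality $d_{\TV}(\cP, \cQ)^2 \le \tfrac{1}{4} d_{\chi^2}(\cP, \cQ)$, which follows from Cauchy--Schwarz (writing $d_{\TV} = \tfrac12 \sum_x |\cP(x) - \cQ(x)| = \tfrac12 \sum_x \bigl|\tfrac{\cP(x)}{\cQ(x)} - 1\bigr| \cQ(x)$ and applying Cauchy--Schwarz against the measure $\cQ$). Concretely, if $d_{\TV}(\cD_n, \Unif^{\tensor n}) \ge 1/3$, then $d_{\chi^2}(\cD_n, \Unif^{\tensor n}) \ge 4 \cdot (1/3)^2 = 4/9 \ge 1/3$, so the hypothesis of~\Cref{prop:chi-squared-lower-bound} is met and we conclude $n = \Omega(d/\eps^2)$ immediately.

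First I would state the Cauchy--Schwarz bound relating the two distances, either citing it as standard or including the one-line derivation. Then I would instantiate it with $\cP = \cD_n$ and $\cQ = \Unif^{\tensor n}$, check that $\Unif^{\tensor n}$ is everywhere positive on $\Omega^n$ (so the ratios are well-defined), and observe $1/3 \le d_{\TV}$ forces $d_{\chi^2} \ge 4/9 > 1/3$. Finally I would invoke~\Cref{prop:chi-squared-lower-bound} to get the desired $n = \Omega(d/\eps^2)$, thereby completing the lower bound in light of the reduction sketched in the paragraph preceding the corollary (any tester yields an event $E$ with $\Unif^{\tensor n}(E) \ge 2/3$ and $\cD_n(E) \le 1/3$, forcing $d_{\TV}(\cD_n, \Unif^{\tensor n}) \ge 1/3$).

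There is essentially no obstacle here: the corollary is a routine consequence of the proposition once the $d_{\TV}$--$d_{\chi^2}$ comparison is in hand. The only thing to be slightly careful about is the direction of the inequality — $\chi^2$ dominates (the square of) total variation, so a lower bound on $d_{\TV}$ does give a lower bound on $d_{\chi^2}$, which is exactly what we need to feed into~\Cref{prop:chi-squared-lower-bound}. If one wanted to avoid even citing the comparison inequality, an alternative is to redo the proof of~\Cref{prop:chi-squared-lower-bound} but track $d_{\TV}$ throughout; however, this is strictly more work and less clean, so I would not pursue it.
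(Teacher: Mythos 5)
Your proof is correct and follows essentially the same route as the paper: both pass from total variation to $\chi^2$ via a standard comparison inequality and then appeal to~\Cref{prop:chi-squared-lower-bound}. The only (cosmetic) difference is that you use the sharper Cauchy--Schwarz constant ($4\, d_{\TV}^2 \le d_{\chi^2}$ rather than the paper's $2\, d_{\TV}^2 \le d_{\chi^2}$), which lets you invoke the proposition's statement as a black box since $d_{\chi^2} \ge 4/9 \ge 1/3$, whereas the paper's weaker constant yields only $2/9$ and so it instead reuses the intermediate bound $(d/4n\eps^2 - 1)^{-1} \ge d_{\chi^2}(\cD_n, \Unif^{\tensor n})$ from inside the proposition's proof.
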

\begin{proof}
  For all distributions $\mu$ and $\nu$, $2 d_{\TV}(\mu, \nu)^2 \le
  d_{\chi^2}(\mu, \nu)$. Hence,
  \begin{align*}
    (d/4n\eps^2 - 1)^{-1}
    &\ge d_{\chi^2}(\cD_n, \Unif^{\tensor n})
      \ge 2d_{\TV}(\cD_n, \Unif^{\tensor n})^2
      \ge \frac{2}{9},
  \end{align*}
  where the first inequality is obtained in the proof
  of~\Cref{prop:chi-squared-lower-bound}. Therefore,
  $n = \Omega(d/\eps^2)$.
\end{proof}

\subsection{Hardness of learning completely positive distributions}

Given our $\Omega(d/\eps^2)$ lower bound for testing if a distribution $p$ on $[d] \times [d]$ is completely positive, it is natural to ask whether one can improve the trivial upper bound of $O(d^2/\eps^2)$. This trivial upper bound comes from fully learning $p$ to $\eps/2$ accuracy in trace distance and then checking if the estimate is $\eps/2$-close to~CP.  A natural strategy for improving the testing upper bound is the \emph{testing by learning} method; if we could show an $o(d^2/\eps^2)$ upper bound for \emph{learning} an unknown $p$ on $[d] \times [d]$ under the assumption that it is CP, this would yield an improved testing upper bound.  However, here we show that this is not possible.

\begin{theorem}                                     \label{thm:learning-CP-hard}
    Let $\mathcal{L}$ be an algorithm that gets as input a parameter $0 < \eps < \frac12$ as well as access to $n$~samples from a completely positive probability distribution $p$ on $[d] \times [d]$.  Suppose $\mathcal{L}$ is guaranteed to output a hypothesis distribution $\widehat{p}$ satisfying $d_{\TV}(p,\widehat{p}) \leq .1\eps$ with probability at least $\frac23$.  Then $n \geq \Omega(d^2/\eps^2)$.
\end{theorem}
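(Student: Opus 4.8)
The plan is to use Fano's inequality (the information-theoretic lower bound method mentioned in the introduction). I would construct a large packing: a family $\{p_S\}$ of completely positive distributions on $[d]\times[d]$, indexed by subsets $S$ of some ground set of size $\Theta(d^2)$, such that (i) each $p_S$ is genuinely completely positive, (ii) any two distinct $p_S, p_{S'}$ are at least $\Omega(\eps)$-far apart in total variation distance, and (iii) the mutual information between the index $\bs S$ and the $n$ samples is small unless $n = \Omega(d^2/\eps^2)$. Then a learner achieving error $.1\eps$ with probability $\tfrac23$ would let us identify $\bs S$ with probability bounded away from zero, and Fano forces $n = \Omega(d^2/\eps^2)$.

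For the construction I would perturb the uniform distribution $\Unif = J/d^2$ entrywise by a symmetric $\pm\eps$ pattern, but \emph{without} the cut structure used earlier (we now want CP distributions, not far-from-CP ones). Concretely, partition the $\binom{d}{2}$ off-diagonal unordered pairs (or the diagonal, or some designated block of $\Theta(d^2)$ entries) into cells, pick a subset $S$ of these cells, and set $A^S_{ij} = (1 \pm c\eps)/d^2$ according to whether the cell containing $\{i,j\}$ is in $S$, keeping the matrix symmetric and summing to $1$. The key point is to choose the perturbation small enough (a constant times $\eps/d$, say, on each of $\Theta(d^2)$ entries, so the total TV perturbation is $\Theta(\eps)$) and structured enough that $A^S$ remains PSD — e.g.\ by making it a small diagonally-dominant or direct-sum-of-$2\times2$-blocks perturbation of $\Unif$ — and hence doubly nonnegative; for the narrow regime where doubly nonnegative may fail to be CP one can instead build $A^S$ directly as an explicit mixture $\sum c_i v_iv_i^\tp$ with nonnegative $v_i$, e.g.\ mixing $\Unif$ with rank-one terms supported on two coordinates. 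Using a Gilbert--Varshamov / Chernoff argument I would then extract $2^{\Omega(d^2)}$ such subsets that are pairwise $\Omega(d^2)$-apart in Hamming distance, so the corresponding $A^S$ are pairwise $\Omega(\eps)$-far in TV, while each individual $A^S$ satisfies $d_{\TV}(A^S,\Unif) = \Theta(\eps)$.

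The mutual information bound is the standard computation: $I(\bs S ; \bs x_1,\dots,\bs x_n) \le n \cdot \max_S \KL{A^S}{\Unif}$ (or one bounds $I$ by the average of pairwise KL / uses the $\chi^2$-bound on KL), and since $A^S$ differs from $\Unif$ by $O(\eps/d)$ in each of $\Theta(d^2)$ coordinates, $\KL{A^S}{\Unif} = O(\eps^2/d^2)$. Combining with Fano, $\log|\{S\}| = \Omega(d^2) \le I + 1 \le n\,O(\eps^2/d^2) + 1$, which yields $n = \Omega(d^2/\eps^2)$. The main obstacle is step (i)–(ii) together: arranging a $\Theta(d^2)$-dimensional family of perturbations of $\Unif$ that (a) stays inside the CP cone — not merely doubly nonnegative, which is the subtle point since for $d\ge 5$ these differ — while (b) still giving pairwise TV distance $\Omega(\eps)$ and per-instance KL only $O(\eps^2/d^2)$. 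I expect the cleanest route is to take each $A^S$ to be an explicit convex combination $A^S = (1-\gamma)\Unif + \gamma\sum_{\text{cell }c\in S} w_c w_c^\tp$ for suitable nonnegative $w_c$ and $\gamma = \Theta(\eps)$, so complete positivity is manifest by construction and only the distance estimates remain to be checked.
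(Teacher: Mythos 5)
Your overall strategy is the same as the paper's: a Fano-type argument with a packing of $2^{\Omega(d^2)}$ completely positive distributions that are pairwise $\Omega(\eps)$-separated in total variation but pairwise $O(\eps^2)$-close in KL divergence, and your proposed fix for the CP-membership issue --- building each distribution as an explicit mixture of rank-one terms $vv^\tp$ with $v$ nonnegative and supported on two coordinates, so that complete positivity is manifest --- is exactly what the paper does (it uses the $\binom{d}{2}$ product distributions $q_{(i,j)}$ uniform on $\{i,j\}\times\{i,j\}$, perturbs their mixture weights by $\pm\eps$ along a perfect matching of the pairs, and indexes the family by a code of relative Hamming distance $0.4$). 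Two points in your sketch need repair. First, your KL bound is off by a factor of $d^2$: a perturbation of total $\ell^1$ size $\Theta(\eps)$ spread over $\Theta(d^2)$ cells gives $\KL{A^S}{\Unif} = \Theta(\eps^2)$, not $O(\eps^2/d^2)$ (and your stated per-entry perturbation of $\Theta(\eps/d)$ is inconsistent with a total TV perturbation of $\Theta(\eps)$ over $\Theta(d^2)$ entries). Moreover the chain $\Omega(d^2) \le n\cdot O(\eps^2/d^2)+1$ would actually yield $n = \Omega(d^4/\eps^2)$, which cannot be right, since $O(d^2/\eps^2)$ samples suffice to learn \emph{any} distribution on $[d]^2$; with the corrected bound $\KL{p_i}{p_j} = O(\eps^2)$ the Fano step gives precisely $n = \Omega(d^2/\eps^2)$, so this is a fixable slip rather than a dead end. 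Second, the rank-one terms supported on pairs $\{i,j\}$ overlap on the diagonal, so different index sets produce distributions whose diagonal masses drift in correlated ways; this threatens both the clean per-coordinate TV accounting and the entrywise likelihood-ratio bound used in the KL estimate. The paper resolves this by mixing in a diagonal ``correcting'' distribution (itself trivially CP) so that every member of the family places exactly the same mass on every diagonal element, and your construction would need an analogous normalization before the distance estimates you defer can actually be ``checked.''
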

\begin{proof}
    By the Fano inequality method, it suffices to construct completely positive distributions $p_1, \dots, p_N$ on $[d] \times [d]$ with: 
    \begin{equation} \label{eqn:CP-reqs}
        \text{(i) } N \geq 2^{\Omega(d^2)}; \quad \text{(ii) } d_{\TV}(p_i,p_j) > .1\eps \text{ for all $i \neq j$;} \quad  \text{(iii) } \KL{p_i}{p_j} \leq O(\eps^2) \text{ for all $i \neq j$,}
    \end{equation}
    where $\KL{\place}{\place}$ denotes Kullback--Leibler divergence.

    To get started on this, let $\mathcal{I}$ denote the set of $\binom{d}{2}$ ``upper-triangular'' pairs $\{(i,j) \in [d] \times [d] : i < j\}$; similar to before we may assume, without of loss of generality, that $d$ is a multiple of~$4$ and hence $|\mathcal{I}|$ is even.  Given $(i,j) \in \mathcal{I}$, let us define the probability distribution $q_{(i,j)}$ on $[d] \times [d]$ by:
    \[
            q_{(i,j)} \text{ has probability mass $\frac14$ on each of } (i,i),\ (i,j),\ (j,i),\ (j,j).
    \]
    Note that $q_{(i,j)}$ is a product distribution, namely the two-fold product of the distribution on $[d]$ that puts mass $\frac12$ on $i$ and mass $\frac12$ on $j$.  Thus any mixture of $q_{(i,j)}$'s is a completely positive distribution.
    
    Next, let $\mathcal{M} = (e_1, \dots, e_m)$ denote an arbitrary perfect matching on~$\mathcal{I}$, where  $m = \frac12 \binom{d}{2}$. If $e = ( (i_0,j_0), (i_1,j_1))$ is a typical matched pair from~$\mathcal{M}$, define the following two completely positive distributions:
    \[
        r_e^{(0)} = \frac{1+\eps}{2} q_{(i_0,j_0)} + \frac{1-\eps}{2} q_{(i_1,j_1)}, \qquad
        r_e^{(1)} = \frac{1-\eps}{2} q_{(i_0,j_0)} + \frac{1+\eps}{2} q_{(i_1,j_1)}.
    \]
    
    Next, given $\omega \in \{0,1\}^m$, define the following completely positive distribution:
    \[
        \widetilde{p}_\omega = \avg_{k=1\dots m} r_{e_k}^{(\omega_k)}.
    \]
    That is, $\widetilde{p}_\omega$ is an equal mixture of some half of the the $r_e^{(\cdot)}$ distributions, with $\omega$ selecting which of $r_e^{(0/1)}$ is taken for each $e \in \mathcal{M}$.
    
    Let us now consider the ``diagonal'' probability mass of the $\widetilde{p}_\omega$ distributions.  First, since every $q_{(i,j)}$ puts $\frac12$ of its probability mass on diagonal elements, the same is true of the $r_{e}^{(\cdot)}$ and $\widetilde{p}_\omega$ distributions.  Furthermore, it is easy see that $\widetilde{p}_\omega(i,i) = \frac{1+\delta^i_\omega}{2d}$ for numbers $\delta^i_\omega \in [-\epsilon, \epsilon]$ for all $i \in [d]$ and $\omega \in \{0,1\}^m$.  Thus for each $\omega \in \{0,1\}^m$ we may construct a ``correcting'' distribution $u_\omega$ on $[d] \times [d]$, completely supported on diagonal elements with $u_\omega(i,i) = \frac{1-\delta^i_\omega/2}{d}$, such that
    \[
        p_\omega \coloneqq \frac12 \widetilde{p}_\omega + \frac12 u_\omega
    \]
    has $p_\omega(i,i) = \frac{3}{4d}$ for all $i \in [d]$.  In other words, each $p_\omega$ has $\frac{3}{4}$ of its probability mass on the diagonal, and is uniform on the diagonal elements.  Furthermore, the distributions $p_\omega$ are still completely positive, since every diagonal distribution is completely positive.  
    
    Our final collection of CP distributions $p_1, \dots, p_N$ satisfying \Cref{eqn:CP-reqs} will be a subset of the~$p_\omega$'s.  Since any two $p_\omega$, $p_{\omega'}$ agree on the diagonal elements, we may calculate:
    \[
        d_{\TV}(p_\omega, p_{\omega'}) = \sum_{(i,j) \in \mathcal{I}} \abs{p_\omega(i,j) - p_{\omega'}(i,j)} = \frac12 \avg_{k=1\dots m} \begin{dcases} \begin{rcases} \frac{\eps}{2} & \text{if $\omega_{k} \neq \omega'_{k}$} \\ 0 & \text{if $\omega_{k} = \omega'_{k}$} \end{rcases} \end{dcases}  = \frac{\Delta(\omega,\omega')}{4m}\eps,
    \]
    where $\Delta(\place,\place)$ is Hamming distance.  One may similarly compute
    \[
        \KL{p_\omega}{p_{\omega'}} = \frac{\Delta(\omega,\omega')}{8m} \eps \ln\left(\frac{1+\eps}{1-\eps}\right) \leq \frac{3\Delta(\omega,\omega')}{8m} \eps^2,
    \]
    where the inequality used $\eps < \frac12$.  To complete the proof via \Cref{eqn:CP-reqs}, it remains to choose a subset of $N = 2^{\Omega(m)} = 2^{\Omega(d^2)}$ elements of $\{0,1\}^m$ such that any two distinct chosen $\omega, \omega'$ have $\Delta(\omega, \omega') \geq .4m$.  This is easily done via the probabilistic method, or an explicit error-correcting code.
\end{proof}

\section{Testing separability}\label{sec:test-sep}

Let $d$ be a positive integer. As in the previous section, we may
assume, without loss of generality, that $d$ is even.

Let $\cH = \CC^d \tensor \CC^d$, let $\U(\cH)$ denote the set of unitary
operators on $\cH$, and recall that $\Sep$ denotes the set of separable
states on $\cH$. For all operators $T$ on $\cH$, let $\norm{T}_p$ denote
the Schatten $p$-norm of $T$, viz.\
$\norm{T}_p = \paren*{\tr(\abs{T}^p)}^{\frac{1}{p}}$, where
$\abs{T} = \sqrt{T^\dag T}$ is the absolute value of the operator
$T$. Let $d_{\tr}(\varrho, \sigma) = \half \norm{\varrho - \sigma}_1$
denote the trace distance between quantum states $\varrho$ and $\sigma$.

We begin by defining a family of quantum states which are with high
probability $O(\eps)$-far from $\Sep$. For $0 \le \eps \le 1/2$, let
$\sD_\eps$ be the diagonal matrix on $\cH$ defined by
\begin{align*}
  \sD_\eps
  &= \diag \paren*{\frac{1+2\eps}{d^2}, \dotsc, \frac{1+2\eps}{d^2},
    \frac{1-2\eps}{d^2}, \dotsc, \frac{1-2\eps}{d^2}},
\end{align*}
where $\tr(\sD_\eps) = 1$, and let $\cD$ denote the family of all
quantum states on $\cH$ with the same spectrum as $\sD_\eps$, viz.\
$\cD = \set{U \sD_\eps U^\dag \mid U \in \U(\cH)}$.

Our lower bound will rely on the following theorem which follows
immediately from~\cite[Lemma 2.22 and Theorem 4.2]{O'Donnell:2015}:
\begin{theorem}\label{thm:from-qst}
  $\Omega(d^2/\eps^2)$ copies are necessary to test whether a quantum
  state $\varrho$ on $\cH$ is the maximally mixed state or $\varrho \in
  \cD$.
\end{theorem}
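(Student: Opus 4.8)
The plan is to reduce the task directly to a spectrum-testing lower bound that is already established in \cite{O'Donnell:2015}. Write $D = d^2$, so that $\cH \cong \CC^D$, and note that the spectrum of $\sD_\eps$ consists of $D/2$ eigenvalues equal to $\frac{1+2\eps}{D}$ and $D/2$ equal to $\frac{1-2\eps}{D}$; hence $\cD$ is exactly the set of states on $\CC^D$ obtained by conjugating this fixed ``two-level'' spectrum by an arbitrary unitary, and by unitary invariance of the trace norm every element of $\cD$ is at trace distance exactly $\eps$ from $\unit/D$. The problem is therefore: given $\varrho^{\tensor n}$, decide whether $\varrho = \unit/D$ or $\varrho$ has the two-level spectrum, where the hard instance for the latter case is a Haar-random conjugate $\bs U \sD_\eps \bs U^\dag$.

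First I would symmetrize: given any candidate tester $\{E_0,E_1\}$ on $(\CC^D)^{\tensor n}$, replace $E_b$ by $\int V^{\tensor n} E_b (V^\dag)^{\tensor n}\, dV$ over Haar-random $V$. Both $\unit/D$ and the Haar-random-conjugate ensemble are invariant under this averaging, so the symmetrized tester has the same acceptance probabilities, and by Schur--Weyl duality it commutes with the $\mathrm{U}(D)$ action; since $\varrho^{\tensor n}$ is permutation-invariant, its reduced state on the $S_n$-multiplicity register within each Schur block is maximally mixed independently of $\varrho$, so the symmetrized tester may be taken to depend only on the Young diagram $\bs\lambda \vdash n$ (with at most $D$ rows) produced by weak Schur sampling. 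Consequently the copy complexity is governed by $d_{\TV}$ between two distributions on Young diagrams: the Schur--Weyl distribution $\mathrm{SW}_n(\unit/D)$ of the maximally mixed state, and $\mathrm{SW}_n(\sD_\eps)$ --- the Haar average over $\bs U$ collapses to this single fixed distribution because the Schur projectors commute with $\bs U^{\tensor n}$. It then remains to show this total variation distance is $o(1)$ whenever $n = o(D/\eps^2)$; this is precisely \cite[Theorem 4.2]{O'Donnell:2015}, set up via the reduction \cite[Lemma 2.22]{O'Donnell:2015}, and substituting $D = d^2$ and absorbing the constant $2$ from $2\eps$ into the $\Omega(\cdot)$ yields the claim.

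The main obstacle --- which is exactly the part imported from \cite{O'Donnell:2015} rather than redone here --- is the statistical-closeness estimate for $\mathrm{SW}_n(\unit/D)$ and $\mathrm{SW}_n(\sD_\eps)$ in the regime $n \ll D/\eps^2$. The natural route is the method of moments on the power sums $p_k(\bs\lambda) = \sum_i \lambda_i^k$, whose expectations under $\mathrm{SW}_n(\rho)$ are universal polynomials in $n$ and in the moments $\tr(\rho^j)$; for the two-level spectrum $\tr(\rho^j)$ differs from the maximally mixed value $D^{1-j}$ only at order $\eps^2$, so all sufficiently low-degree moments of $\bs\lambda$ agree up to lower-order terms, and bounding enough of them (or, more efficiently, a $\chi^2$-type divergence between the two Schur--Weyl distributions) forces any distinguisher's advantage to be $o(1)$ unless $n = \Omega(D/\eps^2)$. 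Carrying this out is the technical heart of \cite{O'Donnell:2015}, which we invoke as a black box.
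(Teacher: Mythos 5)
Your proposal matches the paper exactly: the paper gives no independent proof of this theorem, stating only that it ``follows immediately from \cite[Lemma 2.22 and Theorem 4.2]{O'Donnell:2015},'' which is precisely the black box you invoke (with the same substitution $D = d^2$ and the same identification of $\cD$ as the Haar orbit of the two-level spectrum). Your additional sketch of the symmetrization/Schur--Weyl reduction and the moment comparison is a correct account of what those cited results do, but it is supplementary detail rather than a different route.
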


If $\bs{U}$ is a random unitary on $\cH$ distributed according to the
Haar measure, then $\bs{\varrho} = \bs{U} \sD_\eps \bs{U}^\dag$ is a random
element of $\cD$. This induced probability measure is invariant under
conjugation by a fixed unitary: for all $V \in \U(\cH)$,
$V \bs{\varrho} V^\dag$ has the same distribution as $\bs{\varrho}$. We want
to show the following:
\begin{lemma}\label{claim:eps-far-twothirds}
  There is a universal constant $C_0$ such that for all $C_0 / \sqrt{d}
  \le \eps \le 1/2$, the following holds when $\bs{\varrho} = \bs{U} \sD_\eps
  \bs{U}^\dag$ is a uniformly random state in $\cD$:
  \begin{align*}
    \Pr[\forall \sigma \in \Sep, \ \norm{\bs{\varrho} - \sigma}_1 \ge 2\eps]
    &\ge \frac{2}{3}.
  \end{align*}
\end{lemma}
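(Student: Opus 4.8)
The plan is to show that a Haar-random state $\bs{\varrho} = \bs{U}\sD_\eps\bs{U}^\dag$ is, with high probability, $2\eps$-far in trace distance from every separable state. The key idea, following the remark after \Cref{prop:eps-far-cp}, is to produce a quantum analogue of the ``cut witness'': a self-adjoint operator $W$ with $\norm{W}_\infty \le 1$ that is block-positive (so $\tr(W\sigma) \ge 0$ for all $\sigma \in \Sep$) while $\tr(W\bs{\varrho})$ is very negative, i.e.\ $\le -2\eps$. Given such a $W$, Hölder's inequality yields $\norm{\bs{\varrho} - \sigma}_1 \ge \tr(W(\sigma - \bs{\varrho})) = \tr(W\sigma) - \tr(W\bs{\varrho}) \ge 2\eps$ simultaneously for all $\sigma \in \Sep$. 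This is exactly the content that \Cref{prop:holder-distance} (referenced in the excerpt as the quantum counterpart of \Cref{prop:eps-far-cp}) is set up to deliver, so the real work is exhibiting the witness $W$ with high probability over $\bs{U}$.

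The natural candidate for $W$ is built from the swap (flip) operator $\mathbb{F}$ on $\CC^d \tensor \CC^d$, or more precisely a randomly-rotated version adapted to $\bs{\varrho}$. Recall $\tr(\mathbb{F}\,(\rho_1 \tensor \rho_2)) = \tr(\rho_1\rho_2) \ge 0$ for all states, so $\mathbb{F}$ is block-positive with $\norm{\mathbb{F}}_\infty = 1$; the same is true of $W = V^{\otimes 2}\, \mathbb{F}\, (V^\dagger)^{\otimes 2}$ for any unitary $V$ on $\CC^d$, since this amounts to a local change of basis that preserves the product structure. The idea is: $\sD_\eps$ puts weight $\tfrac{1+2\eps}{d^2}$ on half its eigenvectors and $\tfrac{1-2\eps}{d^2}$ on the other half; I would arrange for a flip-type witness whose ``$+1$ eigenspace'' aligns with the symmetric combinations and ``$-1$ eigenspace'' aligns with the antisymmetric combinations built out of the low-weight eigenvectors of $\bs{\varrho}$, so that $\tr(W\bs{\varrho})$ picks up a $-\Theta(\eps)$ contribution. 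Equivalently, one can think of choosing an orthonormal basis $\{\ket{b_1},\dots,\ket{b_{d^2}}\}$ in which $\bs{\varrho}$ is diagonal (the eigenbasis), and trying to pair up ``heavy'' directions and ``light'' directions inside antisymmetric two-qudit states so that the antisymmetric projector detects an imbalance of size $\Theta(\eps)$. The $1/\sqrt d$ threshold on $\eps$ should enter precisely because a generic Haar-random eigenbasis of $\bs{\varrho}$, when we try to squeeze it into the ``product $\otimes$ product'' geometry of $\CC^d \tensor \CC^d$, only respects the intended alignment up to fluctuations of order $1/\sqrt d$, so we need $\eps$ to dominate that error.

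The main obstacle, and the step I expect to take the most care, is the concentration argument: showing that for a Haar-random $\bs U$, the witness value $\tr(W\bs{\varrho})$ concentrates around a negative value of magnitude $\Theta(\eps)$ with probability at least $\tfrac23$. Concretely, after fixing the deterministic witness template $W$ (e.g.\ (a rotation of) the antisymmetric projector, or a signed combination $\Pi_{\mathrm{sym}} - \Pi_{\mathrm{anti}} = \mathbb{F}$), the quantity $\tr(W \bs U \sD_\eps \bs U^\dagger)$ is a smooth (low-degree polynomial) function of the Haar-random unitary $\bs U$, and its expectation can be computed exactly using Weingarten calculus / second-moment formulas for the unitary group — this expectation should come out to be $-\Theta(\eps)$ (or, after centering, the random state's deviation from $\unit/d^2$ in the $W$-direction should have the right sign and size in expectation). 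I would then invoke measure concentration on $\U(d^2)$ (Lévy's lemma / log-Sobolev for the unitary group, as in \cite{O'Donnell:2015, Aubrun:2017}) to show the fluctuations are $O(1/\sqrt{d})$ with overwhelming probability; choosing $C_0$ large enough that $\Theta(\eps) \ge C_0/\sqrt d$ dominates these fluctuations gives $\tr(W\bs\varrho) \le -2\eps$ with probability $\ge \tfrac23$ (in fact $1 - e^{-\Omega(d)}$). Combining this event with the deterministic Hölder bound from \Cref{prop:holder-distance} completes the proof. A secondary technical point to get right is that the witness $W$ genuinely has operator norm exactly $1$ and is genuinely block-positive — but for flip-based witnesses this is standard and needs only to be stated, not re-derived.
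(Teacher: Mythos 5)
Your high-level framing---find a self-adjoint $W$ with $\norm{W}_\infty \le 1$ whose value $\tr(W\sigma)$ is controlled from below on all of $\Sep$ while $\tr(W\bs{\varrho})$ is very negative, then apply H\"older---does match the skeleton of the paper's argument, and \Cref{prop:holder-distance} is indeed the H\"older step. But your concrete witness fails, for two reasons. First, a swap-based witness can never reach the required value: write $\bs{W} = \unit/d^2 - \bs{\varrho}$, whose eigenvalues are $\pm 2\eps/d^2$ each with multiplicity $d^2/2$, so $\norm{\bs W}_1 = 2\eps$. Any locally rotated swap $W_V = (V\tensor V)F(V\tensor V)^\dag$ (where $F(x\tensor y)=y\tensor x$) has $\tr(W_V)=d$ and $\norm{W_V}_\infty=1$, hence
\begin{align*}
  \tr(W_V\,\bs{\varrho}) \;=\; \tfrac1d - \tr(W_V\,\bs W) \;\ge\; \tfrac1d - \norm{W_V}_\infty\norm{\bs W}_1 \;=\; \tfrac1d - 2\eps \;>\; -2\eps
\end{align*}
deterministically, for every $V$ and every outcome of $\bs U$; the target $\tr(W\bs\varrho)\le-2\eps$ is unattainable in this family. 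Second, and more fundamentally, your concentration plan has no signal to concentrate around: for any \emph{fixed} template $W$ one has $\E_{\bs U}[\bs U \sD_\eps \bs U^\dag]=\unit/d^2$, so $\E_{\bs U}\tr(W\bs U\sD_\eps\bs U^\dag)=\tr(W)/d^2$, which for the swap is $+1/d$, not $-\Theta(\eps)$; the centered quantity $\tr(W\bs W)$ has mean zero and typical fluctuations only $O(\eps/d)$. A Haar-random eigenbasis of $\bs\varrho$ is essentially uncorrelated with the small ($d^2$-parameter) family of locally rotated symmetric/antisymmetric splittings, so the alignment you describe does not occur with nonnegligible probability, and the $1/\sqrt d$ threshold does not arise where you place it.

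The paper's proof inverts the quantifiers: the witness is $\bs W = \unit/d^2-\bs{\varrho}$ itself, adapted to the random state and \emph{not} block-positive. \Cref{prop:holder-distance} only requires $\tr(\sigma\bs W)\ge -2\eps\norm{\bs W}_\infty$ for separable $\sigma$, i.e.\ ``not too negative'' rather than nonnegative. To verify this simultaneously for all $\sigma\in\Sep$, the paper replaces $\Sep$ by the polytope generated by a net $\cN$ of at most $C^d$ pure product states (\Cref{lem:sep-approx}), reduces to showing $\abs{\bra{x}\bs U\sX\bs U^\dag\ket{x}}\le\eps$ for each fixed $\ketbra{x}{x}\in\cN$, and proves via \Cref{lem:chi-squared-concentration} that each such event fails with probability at most $4\exp(-dc^2/8)$ provided $\eps\ge\tfrac12 c d^{-1/2}$; the union bound of $C^d$ net points against $\exp(-\Omega(d))$ is precisely where the constant $C_0$ and the hypothesis $\eps\ge C_0/\sqrt d$ come from. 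This net-plus-union-bound step is the crux of the lemma and is entirely absent from your proposal; without it (or a genuinely block-positive witness tailored to $\bs\varrho$, whose block-positivity would itself be a universally quantified statement over product vectors and hence about as hard as the net argument), the proof does not go through.
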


As $\eps$ tends to zero, the elements of $\cD$ get closer to the
maximally mixed state and eventually become separable, by the
Gurvits--Barnum theorem~\cite{Gurvits:2002}. Indeed, if
$\eps \le 1/(2\sqrt{d^2 - 1})$, then $\cD \subset \Sep$. Hence, some
assumption on $\eps$ is necessary for~\Cref{claim:eps-far-twothirds} to
hold.

\Cref{claim:eps-far-twothirds} and~\Cref{thm:from-qst} easily imply the
desired lower bound:

\begin{theorem}
  Let $\varrho$ be a quantum state on $\CC^d \tensor \CC^d$ and let
  $\eps = \Omega(1/\sqrt{d})$. Testing if $\varrho$ is separable or
  $\eps$-far from $\Sep$ in trace distance requires $\Omega(d^2/\eps^2)$
  copies of $\varrho$.
\end{theorem}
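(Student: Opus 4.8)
The plan is to run the standard property-testing reduction from the task of \Cref{thm:from-qst} --- distinguishing the maximally mixed state $\unit/d^2$ from a Haar-random element of $\cD$ --- to the separability testing problem, and then invoke \Cref{claim:eps-far-twothirds} to see that this task really is an instance of separability testing at distance $\eps$.

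In detail, suppose $\cT$ is an $n$-copy tester for $\Sep$ at distance parameter $\eps$. First I would amplify $\cT$ by running $O(1)$ independent instances and taking a majority vote, obtaining an $O(n)$-copy tester $\cT'$ that, on any input, accepts with probability at least $1-\delta$ if the input is separable and at most $\delta$ if the input is $\eps$-far from $\Sep$ in trace distance, where $\delta$ is a small absolute constant. Now consider the behaviour of $\cT'$ on a state $\varrho$ on $\cH$ promised to be either $\unit/d^2$ or a Haar-random $\bs\varrho = \bs U\sD_\eps\bs U^\dag \in \cD$. In the first case $\varrho = \tfrac{\unit}{d}\tensor\tfrac{\unit}{d}$ is separable, so $\cT'$ accepts with probability at least $1-\delta$. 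In the second case the hypothesis $\eps = \Omega(1/\sqrt d)$ permits us to take the hidden constant to be at least the $C_0$ of \Cref{claim:eps-far-twothirds}, so that lemma gives $\norm{\bs\varrho - \sigma}_1 \ge 2\eps$ for every $\sigma \in \Sep$ --- i.e.\ $d_{\tr}(\bs\varrho,\Sep) \ge \eps$, the factor $2$ exactly compensating the $\tfrac12$ in the trace distance --- with probability at least $\tfrac23$ over $\bs U$; conditioned on this, $\cT'$ rejects with probability at least $1-\delta$, so $\cT'$ accepts $\bs\varrho$ with probability at most $1 - \tfrac23(1-\delta)$, which is bounded below $\tfrac12$ for small $\delta$. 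Thus $\cT'$ is an $O(n)$-copy algorithm distinguishing $\unit/d^2$ from a Haar-random element of $\cD$ with constant advantage.

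The lower bound of \Cref{thm:from-qst} (whose proof, via \cite{O'Donnell:2015}, bounds precisely the distinguishing advantage of any $o(d^2/\eps^2)$-copy measurement against exactly this pair of instances) then rules out such a $\cT'$ unless $n = \Omega(d^2/\eps^2)$, which is the claim. The only substantive ingredient is \Cref{claim:eps-far-twothirds}, and it is assumed; consequently there is no real obstacle here, just the bookkeeping of success probabilities and the observation that the hard instances behind \Cref{thm:from-qst} are Haar-random, so it suffices for the reduction to fool an average-case distinguisher rather than to classify every member of $\cD$. The step that would be the crux of a from-scratch proof --- establishing that Haar-random states with the spectrum of $\sD_\eps$ are far from $\Sep$ --- is thus precisely the part we are allowed to take for granted.
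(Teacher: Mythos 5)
Your proposal is correct and follows the same overall route as the paper: reduce the mixedness-testing problem of \Cref{thm:from-qst} to separability testing, using \Cref{claim:eps-far-twothirds} to certify that a Haar-random element of $\cD$ is $\eps$-far from $\Sep$ in trace distance with probability at least $\tfrac23$ (and your bookkeeping of the factor of $2$ between $\norm{\cdot}_1$ and $d_{\tr}$, and of the success probabilities via amplification, is fine --- indeed slightly more careful than the paper, which settles for a $\tfrac23$ vs.\ $\tfrac59$ gap). The one place you diverge is in how you handle the mismatch between ``rejects a Haar-random element of $\cD$ w.h.p.'' and ``rejects every element of $\cD$ w.h.p.'' The paper closes this gap \emph{inside the reduction}: it conjugates the unknown input by a Haar-random unitary $\bs U$ before running the separability test. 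Since the maximally mixed state is invariant under conjugation while any fixed element of $\cD$ becomes Haar-distributed in $\cD$, the resulting algorithm is a bona fide worst-case tester for the problem of \Cref{thm:from-qst} as stated, and the theorem applies as a black box. You instead leave the reduction average-case and assert that the lower bound underlying \Cref{thm:from-qst} is really a bound on the distinguishing advantage against exactly this Haar ensemble. That assertion is true of the proof in~\cite{O'Donnell:2015}, so your argument goes through, but it is not licensed by the statement of \Cref{thm:from-qst} given in the paper and requires opening up the citation; the paper's one-line symmetrization trick avoids this dependence and is worth adopting.
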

\begin{proof}
  Let $\set{E_0, E_1}$ be a measurement corresponding to a separability
  testing algorithm using $n$ copies of $\varrho$. To apply the lower
  bound in~\Cref{thm:from-qst}, we use $\set{E_0, E_1}$ to define an
  algorithm that decides w.h.p.\ if a state $\varrho$ is equal to the
  maximally mixed state $\frac{\unit}{d^2}$ or $\varrho \in \cD$.

  Let $\varrho^{\tensor n}$ be given with either $\varrho \in \cD$ or
  $\varrho = \frac{\unit}{d^2}$. Note that, for all $\varrho \in \cD$,
  $d_{\tr}(\varrho, \frac{\unit}{d^2}) \ge \eps$ holds. Let $\bs{U}$ be
  a random unitary. If $\varrho$ is the maximally mixed state, then
  $V \varrho V^\dag = \varrho$ for all $V \in \U(\cH)$, so
  $(\bs{U} \varrho \bs{U}^\dag)^{\tensor n} = \varrho^{\tensor
    n}$. Otherwise, $\bs{U} \varrho \bs{U}^\dag$ is a random state in
  $\cD$.

  Applying the separability test $\set{E_0, E_1}$ to
  $\bs{U} \varrho \bs{U}^\dag$, we have that:
  \begin{enumerate}[label=(\roman*)]
  \item if $\bs{U} \varrho \bs{U}^\dag = \varrho = \frac{\unit}{d^2}$, then
    $\bs{U} \varrho \bs{U}^\dag$ is separable, so
    \begin{align*}
      \tr((\bs{U} \varrho \bs{U}^\dag)^{\tensor n} E_1)
      &= \tr(\varrho^{\tensor n} E_1)
        \ge \frac{2}{3}.
    \end{align*}

  \item if $\varrho \in \cD$, then the probability of error is
    \begin{align*}
      \E_{\bs{U}} \tr((\bs{U} \varrho \bs{U}^\dag)^{\tensor n} E_1)
      &\le \Pr[\bs{U} \varrho \bs{U}^\dag \ \text{is $\eps$-close to} \
        \Sep] + \Pr[\text{test fails} \mid \bs{U} \varrho \bs{U}^\dag \ \text{is
        $\eps$-far from $\Sep$}] \\
      &\le \frac{1}{3} + \frac{1}{3} \cdot \frac{2}{3}
        = \frac{5}{9},
    \end{align*}
    where the second inequality follows
    from~\Cref{claim:eps-far-twothirds}.
  \end{enumerate}
  Thus, using the separability test, we can distinguish w.h.p.\ between
  $\varrho = \frac{\unit}{d^2}$ and $\varrho \in \cD$ using $n$ copies
  of $\varrho$. Therefore, by~\Cref{thm:from-qst},
  $n = \Omega(d^2/\eps^2)$.
\end{proof}

It remains to show that~\Cref{claim:eps-far-twothirds} holds. Its proof
relies on two main facts: first, that $\Sep$ is approximated by a
polytope with $\exp(O(d)))$ vertices which are separable pure states;
and, second, that a random element of $\cD$ is $\eps$-far from a fixed
pure state except with probability $\exp(-O(d))$.

The first fact follows from the next lemma which is a rephrasing
of~\cite[Lemma 9.4]{Aubrun:2017}:
\begin{lemma}\label{lem:sep-approx}
  There exists a constant $C > 0$ such that, for every dimension $d$,
  there is a family $\cN$ of pure product states on $\cH$ (i.e.\ states
  of the form $\ketbra{x \tensor y}{x \tensor y}$ with $x, y \in \CC^d$)
  with $\abs{\cN} \le C^d$ satisfying
  \begin{align*}
    \conv(\cN \cup -\cN) \subset \Sep_\pm \subset 2 \conv(\cN \cup -\cN).
  \end{align*}
\end{lemma}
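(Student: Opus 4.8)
The left-hand inclusion $\conv(\cN \cup -\cN) \subset \Sep_{\pm}$ holds for \emph{any} family $\cN$ of pure product states, since each such state lies in $\Sep$, so $\cN \cup -\cN \subset \Sep \cup -\Sep$ and taking convex hulls preserves the containment. Thus the entire content of the lemma is the right-hand inclusion, and the plan is to build $\cN$ as a product of nets on the sphere and then bootstrap a \emph{coarse} approximation into the factor-$2$ sandwich. Concretely, fix a small absolute constant $\delta$ (I will take $\delta = \tfrac18$), let $\cN_0$ be a $\delta$-net of the unit sphere of $\CC^d$ in the $\ell_2$ metric, and set $\cN = \set{\ketbra{x \tensor y}{x \tensor y} : x, y \in \cN_0}$. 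A standard volumetric bound gives $\abs{\cN_0} \le (1 + 2/\delta)^{2d}$, so with $C \coloneqq (1+2/\delta)^4$ we have $\abs{\cN} \le C^d$.

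The crux is a \emph{dimension-free} gauge estimate. I would show that for unit vectors $a, x \in \CC^d$ with $\norm{a - x}_2 \le \delta$ and any unit vector $b$, the difference $\ketbra{a \tensor b}{a \tensor b} - \ketbra{x \tensor b}{x \tensor b}$ lies in $2\delta \cdot \Sep_{\pm}$, with no dependence on $d$. Indeed this equals $(\ketbra{a}{a} - \ketbra{x}{x}) \tensor \ketbra{b}{b}$, and the traceless rank-$\le 2$ operator $\ketbra{a}{a} - \ketbra{x}{x}$ has eigenvalues $\pm\lambda$ with $\lambda = \sqrt{1 - \abs{\braket{a}{x}}^2} \le \norm{a - x}_2 \le \delta$ (using $\abs{\braket{a}{x}} \ge \Re\braket{a}{x} = 1 - \tfrac12\norm{a-x}_2^2$). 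Writing its spectral decomposition as $\lambda(\ketbra{p}{p} - \ketbra{q}{q})$, the difference becomes $\lambda(\ketbra{p \tensor b}{p \tensor b} - \ketbra{q \tensor b}{q \tensor b})$, i.e.\ $\lambda$ times a difference of two genuine product states, hence an element of $\lambda(\Sep - \Sep) \subset 2\lambda \, \Sep_{\pm} \subset 2\delta \, \Sep_{\pm}$. The identical bound holds when instead the second factor is perturbed. This is the key point: a single-factor perturbation stays in the two-dimensional cone spanned by two product states, so its $\Sep_{\pm}$-gauge is controlled purely by the vector distance, not by the (tiny, $d$-dependent) inradius of $\Sep$.

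Given this, for an arbitrary pure product state $\ketbra{a \tensor b}{a \tensor b}$ I pick net vectors $x, y \in \cN_0$ within $\delta$ of $a, b$ and telescope through the intermediate state $\ketbra{x \tensor b}{x \tensor b}$; applying the estimate to each of the two single-factor steps gives $\ketbra{a \tensor b}{a \tensor b} - \ketbra{x \tensor y}{x \tensor y} \in 4\delta \, \Sep_{\pm} = \tfrac12 \Sep_{\pm}$. Since $\ketbra{x \tensor y}{x \tensor y} \in \cN \subset P \coloneqq \conv(\cN \cup -\cN)$, every extreme point $\pm\ketbra{a \tensor b}{a \tensor b}$ of $\Sep_{\pm}$ lies in the convex set $P + \tfrac12 \Sep_{\pm}$; as $\Sep_{\pm}$ is the convex hull of these extreme points, $\Sep_{\pm} \subset P + \tfrac12 \Sep_{\pm}$. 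Iterating this inclusion and using $aP + bP = (a+b)P$ for the symmetric convex body $P \ni 0$ yields $\Sep_{\pm} \subset (1 + \tfrac12 + \tfrac14 + \cdots)P = 2\conv(\cN \cup -\cN)$, as required.

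I expect the main obstacle to be precisely the dimension-free gauge estimate of the second paragraph. A naive covering argument controls differences of product states only in a fixed matrix norm, and converting that bound into the $\Sep_{\pm}$-gauge would cost a factor equal to the inradius of $\Sep$ (of order $1/d^2$ by Gurvits--Barnum), forcing $\delta \sim 1/d^2$ and a net of size $d^{\Theta(d)}$, which is far too large. The single-factor telescoping trick is exactly what removes this dimension dependence and keeps $\abs{\cN} \le C^d$; the one place needing care is the phase bookkeeping, ensuring that $\ell_2$-closeness of the vectors genuinely controls the eigenvalue $\lambda$ of the projection difference.
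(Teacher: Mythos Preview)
The paper does not give its own proof of this lemma; it is quoted verbatim as a rephrasing of \cite[Lemma 9.4]{Aubrun:2017}. Your argument is correct and is essentially the standard net-plus-bootstrapping proof one finds in that reference: build a product $\delta$-net, show the dimension-free gauge bound $\ketbra{a\tensor b}{a\tensor b}-\ketbra{x\tensor b}{x\tensor b}\in 2\delta\,\Sep_\pm$ by spectrally decomposing the rank-two difference $\ketbra{a}{a}-\ketbra{x}{x}$, telescope over the two factors to obtain $\Sep_\pm\subset P+\tfrac12\Sep_\pm$, and sum the geometric series (using compactness of $P$ to pass to the limit).
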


Now, we wish to upper bound the probability that a random element of
$\cD$ is $\eps$-far from a fixed pure state. The following result
provides a sufficient condition for a state $\sigma$ on $\cH$ to be
$\eps$-far from a state $\varrho \in \cD$:
\begin{proposition}\label{prop:holder-distance}
  Let $\varrho \in \cD$ be arbitrary and let
  $W = \dfrac{\unit}{d^2} - \varrho$. For all quantum states $\sigma$ on
  $\cH$, if $\tr(\sigma W) \ge - \eps \norm{W}_\infty$, then
  $\norm{\varrho - \sigma}_1 \ge \eps$.
\end{proposition}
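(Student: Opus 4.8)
The statement is the quantum counterpart of \Cref{prop:eps-far-cp}, and I would prove it in the same spirit: by Hölder's inequality, using a suitable scaling of $W$ itself as the test operator, just as the cut operator $x x^\tp$ played that role there. The first step is to record the spectral data of $W$. Since $\varrho \in \cD$ has the same eigenvalues as $\sD_\eps$ --- namely $\tfrac{1+2\eps}{d^2}$ and $\tfrac{1-2\eps}{d^2}$, each with multiplicity $d^2/2$ --- and $W = \tfrac{\unit}{d^2} - \varrho$ commutes with $\varrho$ and is hence diagonal in any eigenbasis of $\varrho$, the operator $W$ is Hermitian with eigenvalues $\pm \tfrac{2\eps}{d^2}$, each of multiplicity $d^2/2$. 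In particular $\tr(W) = \tr(\tfrac{\unit}{d^2}) - \tr(\varrho) = 0$, the Schatten $\infty$-norm is $\norm{W}_\infty = \tfrac{2\eps}{d^2}$, and since every eigenvalue of $W$ has absolute value $\norm{W}_\infty$ we have $W^2 = \norm{W}_\infty^2\,\unit$. (Also $\tr(\sigma W)$ is real, as $\sigma$ and $W$ are Hermitian, so the hypothesis is meaningful.)

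Next I would compute $\tr(\varrho W)$. Writing $\varrho = \tfrac{\unit}{d^2} - W$ and using $\tr(W) = 0$, $W^2 = \norm{W}_\infty^2\,\unit$, and $\tr(\unit) = d^2$,
\[
  \tr(\varrho W) = \tfrac{1}{d^2}\tr(W) - \tr(W^2) = -d^2 \norm{W}_\infty^2 = -2\eps \norm{W}_\infty,
\]
the last equality using $d^2 \norm{W}_\infty = 2\eps$. Now I would apply Hölder's inequality (the duality of the Schatten $1$- and $\infty$-norms) with the test operator $U = W / \norm{W}_\infty$, which is Hermitian with $\norm{U}_\infty = 1$. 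Since $\sigma - \varrho$ is Hermitian,
\[
  \norm{\varrho - \sigma}_1 = \norm{\sigma - \varrho}_1 \ge \tr(U(\sigma - \varrho)) = \frac{\tr(\sigma W) - \tr(\varrho W)}{\norm{W}_\infty}.
\]
Substituting the hypothesis $\tr(\sigma W) \ge -\eps \norm{W}_\infty$ and the value $\tr(\varrho W) = -2\eps \norm{W}_\infty$ just computed yields $\norm{\varrho - \sigma}_1 \ge (-\eps \norm{W}_\infty + 2\eps \norm{W}_\infty)/\norm{W}_\infty = \eps$, as desired.

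There is no real obstacle here; the argument is short and parallels the proof of \Cref{prop:eps-far-cp}. The only points needing care are the choice of the test operator $U = W/\norm{W}_\infty$ and the observation that the $-2\eps \norm{W}_\infty$ coming from $\tr(\varrho W)$ together with the $-\eps \norm{W}_\infty$ lower bound on $\tr(\sigma W)$ combine to leave exactly $\eps \norm{W}_\infty$ in the numerator.
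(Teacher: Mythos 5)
Your proposal is correct and follows essentially the same route as the paper: compute $\tr(\varrho W) = -\tfrac{4\eps^2}{d^2} = -2\eps\norm{W}_\infty$ and $\norm{W}_\infty = \tfrac{2\eps}{d^2}$, then apply H\"older duality of the Schatten $1$- and $\infty$-norms to get $\norm{\sigma - \varrho}_1 \ge \bigl(\tr(\sigma W) - \tr(\varrho W)\bigr)/\norm{W}_\infty \ge 2\eps - \eps = \eps$. The only cosmetic difference is that you derive $\tr(\varrho W)$ from the identity $W^2 = \norm{W}_\infty^2\,\unit$ rather than from $\tr(\varrho^2) = \tfrac{1+4\eps^2}{d^2}$ directly, which is an equivalent computation.
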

\begin{proof}
  Note that
  \begin{align*}
    \tr(\varrho W)
    &= \frac{1}{d^2} - \tr(\varrho^2)
      = \frac{1}{d^2} - \frac{1 + 4 \eps^2}{d^2}
      = - \frac{4 \eps^2}{d^2}, \\
    \norm{W}_\infty
    &= \norm*{\frac{\unit}{d^2} - \sD_\eps}_\infty
      = \frac{2 \eps}{d^2}.
  \end{align*}
  By H\"{o}lder's inequality for matrices,
  $\tr((\sigma - \varrho)W) \le \norm{\sigma - \varrho}_1 \cdot
  \norm{W}_\infty$. Hence,
  \begin{align*}
    \norm{\sigma - \varrho}_1
    &\ge \frac{\tr(\sigma W) - \tr(\varrho W)}{\norm{W}_\infty}
      = 2 \eps + \frac{\tr(\sigma W)}{\norm{W}_\infty}. \qedhere
  \end{align*}
\end{proof}

When $\sigma = \ketbra{x}{x}$ with
$x \in \cH$ and $\varrho = U \sD_\eps U^\dag$, we have
\begin{align*}\label{eqn:trace-pure}
    \tr(\ketbra{x}{x} W)
    &= \bra{x} W \ket{x} \\
    &= \bra{x} \paren*{\frac{\unit}{d^2} - U \sD_\eps U^\dag}
      \ket{x} \\
    &= \bra{x} U \paren*{\frac{\unit}{d^2} - \sD_\eps}
      U^\dag \ket{x} \\
    &= \norm{W}_\infty \cdot \bra{x} U \sX U^\dag \ket{x}, \numbered
\end{align*}
where $\sX = \diag(-1, \dotsc, -1, 1, \dotsc, 1)$ is just
$\unit/d^2 - \sD_\eps$ divided by $\norm{W}_\infty$. Hence,
$\norm{\varrho - \ketbra{x}{x}}_1 \ge \eps$ holds if
$\bra{x} U \sX U^\dag \ket{x} \ge -\eps$.

Since we are interested in the case when
$\bs{\varrho} = \bs{U} \sD_\eps \bs{U}^\dag$ is random, it suffices to show
that $\bra{x} \bs{U} \sX \bs{U}^\dag \ket{x}$ concentrates in the
interval $[-\eps, \eps]$. This fact follows easily from the next lemma:
\begin{lemma}\label{lem:chi-squared-concentration}
  Let $k$ be a positive even integer. If $\bs{u} \in \CC^{k}$ is a
  uniformly random unit vector, then, for sufficiently large $k$,
  \begin{align*}
    \Pr\bracket*{\abs{\bra{\bs u} Z \ket{\bs u}} \ge \half c k^{-1/4}}
    &\le 4\exp(- \sqrt{k} c^2 / 8),
  \end{align*}
  where $Z = \diag \paren*{1, \dotsc, 1, -1, \dotsc, -1}$ is a $k \by k$
  diagonal matrix with $\tr(Z) = 0$ and $c$ may be any positive
  constant.
\end{lemma}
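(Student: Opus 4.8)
The plan is to realize the Haar‑random unit vector by a Gaussian model, which turns $\bra{\bs u}Z\ket{\bs u}$ into a ratio of two independent Gamma variables; the tail bound then follows from a one‑line Chernoff computation. Write $\bs u = \bs g/\norm{\bs g}_2$, where $\bs g \in \CC^k$ has i.i.d.\ standard complex Gaussian coordinates, so that $\abs{\bs u_i}^2 = \abs{\bs g_i}^2/\sum_j\abs{\bs g_j}^2$ and the $\abs{\bs g_i}^2$ are i.i.d.\ exponential random variables of mean~$1$. Grouping the first $k/2$ coordinates (on which $Z$ acts as $+1$) against the last $k/2$ (on which $Z$ acts as $-1$) gives
\[
  \bra{\bs u}Z\ket{\bs u} = \frac{\bs A - \bs B}{\bs A + \bs B}, \qquad \bs A = \sum_{i \le k/2}\abs{\bs g_i}^2, \quad \bs B = \sum_{i > k/2}\abs{\bs g_i}^2,
\]
where $\bs A$ and $\bs B$ are independent, each distributed as $\mathrm{Gamma}(k/2,1)$. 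Since $(\bs A,\bs B)$ and $(\bs B,\bs A)$ have the same law, the distribution of $\bra{\bs u}Z\ket{\bs u}$ is symmetric about~$0$, so it suffices to bound the upper tail $\Pr[\bra{\bs u}Z\ket{\bs u} \ge t]$ and double it, with $t = \tfrac12 c k^{-1/4}$.

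For the upper tail we may assume $t < 1$ (the case $t \ge 1$ being trivial, as the ratio always lies in $[-1,1]$). Then $\{(\bs A-\bs B)/(\bs A+\bs B) \ge t\}$ is exactly $\{(1-t)\bs A \ge (1+t)\bs B\}$, so applying Markov's inequality to $\exp\bigl(\lambda((1-t)\bs A - (1+t)\bs B)\bigr)$ and using the Gamma moment generating function $\E e^{\lambda\bs A} = (1-\lambda)^{-k/2}$ (valid for $\lambda<1$), we obtain for every $\lambda > 0$ with $\lambda(1-t) < 1$:
\[
  \Pr\bigl[(1-t)\bs A \ge (1+t)\bs B\bigr] \le \bigl((1-\lambda(1-t))(1+\lambda(1+t))\bigr)^{-k/2}.
\]
The base equals $1 + 2t\lambda - (1-t^2)\lambda^2$, which is maximized at $\lambda^* = t/(1-t^2)$ — note $\lambda^*(1-t) = t/(1+t) < 1$, so the Gamma MGF is finite there — with maximum value $1/(1-t^2)$. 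Hence $\Pr[\bra{\bs u}Z\ket{\bs u} \ge t] \le (1-t^2)^{k/2} \le e^{-kt^2/2}$, and by symmetry $\Pr[\abs{\bra{\bs u}Z\ket{\bs u}} \ge t] \le 2e^{-kt^2/2}$. Substituting $t = \tfrac12 c k^{-1/4}$ gives $kt^2/2 = c^2\sqrt{k}/8$, which proves the claim (indeed with constant~$2$, and with no largeness hypothesis on~$k$).

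The only step needing a little care is the one‑variable optimization above: one should verify that $\lambda^* = t/(1-t^2)$ keeps the first Gamma MGF finite and then simplify $\bigl(1-\tfrac{t}{1+t}\bigr)\bigl(1+\tfrac{t}{1-t}\bigr) = \tfrac{1}{1-t^2}$; I do not anticipate a genuine obstacle. If one does not notice this clean cancellation, an alternative is a Bernstein‑type bound for the centered, sub‑exponential sum $\bs A - \bs B$ (which has variance $\Theta(k)$) combined with the concentration of $\bs A + \bs B$ near~$k$; this still gives $e^{-\Theta(c^2\sqrt{k})}$ but with a worse constant and a largeness requirement on~$k$, which is presumably why the lemma is stated in that form.
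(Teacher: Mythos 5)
Your proof is correct. Up to the reduction it follows the same road as the paper: the paper also realizes the Haar-random unit vector via a normalized Gaussian vector (after first flattening $\CC^k$ to $\RR^{2k}$) and arrives at the ratio $(\bs{X}-\bs{Y})/(\bs{X}+\bs{Y})$ with $\bs{X},\bs{Y}$ independent $\chi^2_k$ variables, which has exactly the same law as your $(\bs{A}-\bs{B})/(\bs{A}+\bs{B})$ with $\bs{A},\bs{B}\sim\Gamma(k/2,1)$ (the scale cancels in the ratio, so the normalization convention for the complex Gaussian is immaterial). Where you genuinely diverge is the tail estimate. The paper quotes an off-the-shelf $\chi^2$ concentration bound, $\Pr[\,\abs{\bs{X}/k-1}\ge t\,]\le 2\exp(-kt^2/8)$, applies it to the numerator and denominator separately, and union-bounds over the four deviation events; this is precisely what produces the constant $4$ and the ``sufficiently large $k$'' hypothesis in the statement (and the constant bookkeeping in that final display of the paper's proof is in fact slightly off, though harmlessly so since $c$ is a free constant). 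You instead linearize the event to $(1-t)\bs{A}\ge(1+t)\bs{B}$ and run a direct Chernoff bound; your optimization is right ($\lambda^*(1-t)=t/(1+t)<1$ keeps the MGF finite, and the base does collapse to $1/(1-t^2)$), giving the exact bound $(1-t^2)^{k/2}\le e^{-kt^2/2}$ and hence $2e^{-c^2\sqrt{k}/8}$ for the two-sided tail. This buys a self-contained argument (no citation needed), a better constant in both the prefactor and the exponent, and the removal of the largeness assumption on $k$ --- none of which matters for the downstream application, but your version is strictly stronger and arguably cleaner.
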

\begin{proof}
  Let
  $\bs{u} = (\bs{a}_1 + i \bs{b}_1, \dotsc, \bs{a}_k + i \bs{b}_k) \in
  \CC^k$ be a uniformly random unit vector with
  $\bs{a}_1, \dotsc, \bs{a}_k, \bs{b}_1, \dotsc, \bs{b}_k \in \RR$ and let
  $\bs{v} \in \RR^{2k}$ be defined by
  \begin{align*}
    \bs{v}
    &= (\bs{a}_1, \dotsc, \bs{a}_{\half[k]}, \bs{b}_1, \dotsc,
      \bs{b}_{\half[k]}, \bs{a}_{\half[k]+1}, \dotsc, \bs{a}_k,
      \bs{b}_{\half[k]+1}, \dotsc, \bs{b}_k).
  \end{align*}
  Let $D$ be the $2k \by 2k$ diagonal matrix
  $D = \diag(1, \dotsc, 1, -1, \dotsc, -1)$ with $\tr(D) = 0$. Thus,
  $\bs{v}$ is a uniformly random real unit vector such that
  $\bra{\bs v} D \ket{\bs v} = \bra{\bs u} Z \ket{\bs u}$.

  Let $\bs{x}_1, \dotsc, \bs{x}_k, \bs{y}_1, \dotsc, \bs{y}_k \in \RR$ be
  $2k$ standard Gaussian random variables. Let
  $\bs{X} = \bs{x}_1^2 + \dotsb + \bs{x}_k^2$ and
  $\bs{Y} = \bs{y}_1^2 + \dotsb + \bs{y}_k^2$. By the rotational
  symmetry of multivariate Gaussian random variables, $\bs{v}$ has the
  same distribution as
  \begin{align*}
    \frac{(\bs{x}_1, \dotsc, \bs{x}_k, \bs{y}_1, \dotsc, \bs{y}_k)}{\sqrt{\bs{X} + \bs{Y}}}.
  \end{align*}
  Hence, $\bra{\bs v} D \ket{\bs v}$ and
  $\frac{\bs{X} - \bs{Y}}{\bs{X} + \bs{Y}}$ have the same distribution.
  Since $\bs{X}$ and $\bs{Y}$ are independent $\chi^2$ random variables
  with $k$ degrees of freedom each, it holds that (see e.g.\
  \cite[Example 2.11]{Wainwright:2019})
  \begin{align*}
    \Pr\bracket*{\abs*{\frac{\bs X}{k} - 1} \ge t}
    &\le 2 \exp(-kt^2/8),
  \end{align*}
  for all $t \in (0, 1)$ and similarly for $Y$. Hence, for
  $t = ck^{-1/4}$, we have
  $\Pr\bracket*{\abs{\bs{X} - k} \ge ck^{3/4}} \le 2 \exp(- \sqrt{k} c^2
  / 8)$.

  If $\abs{\bs{X} - k} < ck^{3/4}$ and $\abs{\bs{Y} - k} < ck^{3/4}$,
  then, for $k$ sufficiently large,
  \begin{align*}
    \abs{\bra{\bs v} D \ket{\bs v}}
    &= \frac{\abs{\bs{X} - \bs{Y}}}{\bs{X} + \bs{Y}}
      \le \frac{2ck^{3/4}}{2k - 2ck^{3/4}}
      = \frac{c}{k^{1/4} - 1}
      < \half ck^{-1/4}.
  \end{align*}
  Hence,
  $\Pr[\abs{\bra{\bs v} D \ket{\bs v}} < \half ck^{-1/4}] \ge 1 - 4 \exp(-
  \sqrt{k} c^2 /8)$.
\end{proof}

If $\bs{U}$ is a random unitary distributed according to the Haar
measure on $\U(\cH)$ and $x \in \cH$ is a fixed unit vector, then
$\bs{u} = {\bs U} \ket{x}$ is a uniformly random unit vector in
$\cH$. Hence, we can apply~\Cref{lem:chi-squared-concentration} to
$\abs{\bra{\bs u} \sX \ket{\bs u}}$ to get
\begin{align}\label{eqn:concentration-ineq}
  \Pr\bracket{\abs{\bra{x} \bs{U} \sX \bs{U}^\dag \ket{x}} \ge \eps}
  &\le 4 \exp(-dc^2/8),
\end{align}
where $c$ is an arbitrary positive constant and
$\eps \ge \half c d^{-1/2}$.

We now have all the elements needed to
prove~\Cref{claim:eps-far-twothirds}:

\begin{proof}[Proof of~\Cref{claim:eps-far-twothirds}]
  Let $\bs{\varrho} = \bs{U} \sD_\eps \bs{U}^\dag$ be a uniformly random
  element of $\cD$ and let $\bs{W} = \frac{\unit}{d^2} -
  \bs{\varrho}$. Thus, assuming $\eps \ge cd^{-1/2}$,
  \begin{align*}
    &\Pr\bracket*{\forall \sigma \in \Sep, \ d_{\TV}(\bs{\varrho}, \sigma)
    \ge \eps} \\
    &\qquad=\Pr\bracket*{\forall \sigma \in \Sep, \ \norm{\bs{\varrho} - \sigma}_1
    \ge 2\eps} \\
    &\qquad\ge \Pr\bracket*{\forall \sigma \in \Sep, \ \tr(\sigma {\bs W})
      \ge - 2 \eps \norm{\bs W}_\infty}
    & (\text{by~\Cref{prop:holder-distance}}) \\
    &\qquad\ge \Pr\bracket*{\forall \sigma \in 2 \conv(\cN \cup -\cN), \ \tr(\sigma {\bs W})
      \ge - 2 \eps \norm{\bs W}_\infty}
    & (\text{by~\Cref{lem:sep-approx}}) \\
    &\qquad= \Pr\bracket*{\forall \ketbra{x}{x} \in \cN \cup -\cN, \ 2 \tr(\ketbra{x}{x} {\bs W})
      \ge - 2 \eps \norm{\bs W}_\infty}
    & (\text{by convexity}) \\
    &\qquad= \Pr\bracket*{\forall \ketbra{x}{x} \in \cN, \ \abs{\bra{x} \bs{U}
      \sX \bs{U}^\dag \ket{x}} \le \eps}
    & (\text{by~\Cref{eqn:trace-pure}}) \\
    &\qquad\ge 1 - \sum_{\ketbra{x}{x} \in \cN} \Pr\bracket*{\abs{\bra{x} \bs{U}
      \sX \bs{U}^\dag \ket{x}} > \eps}
    & (\text{by the union bound}) \\
    &\qquad\ge 1 - \abs{\cN} \cdot 4 \exp(-dc^2/8)
    & (\text{by~\Cref{eqn:concentration-ineq}}) \\
    &\qquad= 1 - 4 \exp(d (\log C - c^2/8))
    & (\text{since} \ \abs{\cN} = C^d).
  \end{align*}
  Hence, if $c = \sqrt{8 (\log C + 1)}$, then
  \begin{align*}
    \Pr\bracket*{\forall \sigma \in \Sep, \ d_{\TV}(\bs{\varrho}, \sigma)
    \ge \eps}
    &\ge 1 - 4 \exp(d (\log C - c^2/8)) \\
    &= 1 - 4 \exp(-d) \\
    &\ge \frac{2}{3},
  \end{align*}
  for $d \ge \log 12$.
\end{proof}

\bibliography{badescu}
\bibliographystyle{alpha}

\end{document}